\newcommand{\noun}[1]{\textsc{#1}}
\newtheorem{example}{Example}
\newtheorem{definitn}{Definition}
\newtheorem{remrk}{Remark}
\newtheorem{thm}{Theorem}
\renewcommand{\mathbf}[1]{\bm{#1}}
\newcommand{\Midx}{m}			
\newcommand{\Mrunidx}{l}			
\newcommand{\Mdeptha}{i}			
\newcommand{\Mdepthb}{j}			
\newcommand{\MsymS}{x}		
\newcommand{\Mtrellis}[2]{\theta^{(#1)}(#2)}	
\newcommand{\MNtrellis}[2]	       {\bar{\theta}^{(#1)}(#2)}	
\newcommand{\Msymbol}[3]{\Omega^{(#1)}_{#2}(#3)}	
\newcommand{\MNsymbol}[3]{\bar{\Omega}^{(#1)}_{#2}(#3)}	
\newcommand{\TT}{T}			
\newcommand{\Trank}{n}			
\newcommand{\PP}	{\mathtt{P}}	
\newcommand{\KI}      {K_{1}}		
\newcommand{\KIa}    {K_{1a}}		
\newcommand{\KIb}    {K_{1b}}		
\newcommand{\KII}    {K_{2}}			
\newcommand{\Dtrellis}[1]  {\theta^{\fD}(#1)}		
\newcommand{\Dsymbol}[2]   {\Omega^{\fD}_{#1}(#2)}	
\newcommand{\DsymbolN}[2]   {\bar{\Omega}^{\fD}_{#1}(#2)}	
\newcommand{\Nslots}	{N}		
\newcommand{\Swidth}	{\Delta_w}		
\newcommand{\DmuDelta}	{\Delta_{\mu}}		
\newcommand{\Didx}	{j}		
\newcommand{\Dfraclow}	{\epsilon}
\newcommand{\Dd}	{\mathbf{d}}		
\newcommand{\Dmunew}	{\mu_{\mathrm{new}}}	
\newcommand{\Dmuin}	{\mu_{\mathrm{in}}}	
\newcommand{\Del}[2]	{\mathbf{d}_{#1}[#2]}		
\newcommand{\vFlow}[2]	{\eta(#1,#2)}		
\newcommand{\vSIdent}	{1^{\oDot}}		
\newcommand{\vSNull}	{0^{\oPlus}}		
\newcommand{\vu}	{q}		
\newcommand{\vmmax}	{M}		
\newcommand{\sV}	{\mathbb{V}}		
\newcommand{\sE}	{\mathbb{E}}		
\newcommand{\sS}	{\mathbb{S}}		
\newcommand{\sC}	{\mathbb{C}}		
\newcommand{\sD}	{\mathbb{D}}		
\newcommand{\sR}	{\mathbb{R}}		
\newcommand{\fE}	{\mathrm{E}}		
\newcommand{\fs}	{c}		
\newcommand{\fD}	{\mathcal{D}}		
\newcommand{\oPlus}	{\oplus}		
\newcommand{\oDot}	{\odot}		
\newcommand{\oSum}	{\mathop{\sum \hspace{-3.5mm} \oPlus} }
\newcommand{\oSumFoot}	{\mathop{\sum \hspace{-2.8mm} \oPlus \hspace{1mm} } }
\newcommand{\pe}	{\hspace{1mm}+\hspace{-1.3mm}=}
\DeclareMathSymbol{\minus}{\mathord}{operators}{"2D}
\begin{document}

\title{Trellis Computations}

\author{Axel Heim, Vladimir Sidorenko, Uli Sorger}

\maketitle
\begin{abstract}
For a certain class of functions, the distribution of the function
values can be calculated in the trellis or a sub-trellis. The forward/backward
recursion known from the BCJR algorithm \cite{BCJR1974} is generalized
to compute the moments of these distributions. In analogy to the
symbol probabilities, by introducing a constraint at a certain depth
in the trellis we obtain symbol moments. These moments are required
for an efficient implementation of the discriminated belief propagation
algorithm in \cite{Sorger2007}, and can furthermore be utilized to
compute conditional entropies in the trellis. 

The moment computation algorithm has the same asymptotic complexity
as the BCJR algorithm. It is applicable to any commutative semi-ring,
thus actually providing a generalization of the Viterbi algorithm
\cite{Viterbi1967}.

\end{abstract}
\begin{keywords}
Trellis Algorithms, Viterbi Algorithm, BCJR Algorithm, Distributions,
Moments, Decoding, Complexity
\end{keywords}

\section{Introduction}

Trellises were introduced into the coding theory literature by Forney
\cite{Forney1967} as a means of describing the Viterbi algorithm
for decoding convolutional codes. Bahl et al. \cite{BCJR1974} showed
that block codes can also be described by a trellis, and Wolf \cite{Wolf1978}
proposed the use of the Viterbi algorithm for trellis-based soft-decision
decoding of block codes. Massey \cite{Massey1978} gave a graph-theoretic
definition of a block trellis and an alternative construction of minimal
trellises. Forney's paper \cite{Forney1988} showed that group codes,
including linear codes and lattices, have a well-defined trellis structure.

In \cite{McEliece1996}, McEliece investigated the complexity of a
generalized Viterbi algorithm which allows efficient computation of
flows on a code trellis. These results were further generalized in
\cite{Aji2000} and \cite{Kschischang2001}. However, the calculation
of flows does not fully exploit the capabilities of the trellis (representation):
For a certain set of functions it is possible to calculate the moments
of these functions in the trellis. These can be scalar or vectorial,
as long as they are linear and fulfill a separability criterion.

For iterative decoding of coupled codes, the popular sum-product algorithm
is used to calculate the symbol probabilities of the component codes.
These probabilities are exchanged between component decoders until
a stable solution is found. This iterative algorithm works very well
for long {}``turbo'', low-density parity check (LDPC) and some other
codes, obtained by concatenation of simple component codes in a special
way. However, performance becomes poor when utilizing short or some
good component codes. 

Recently, Sorger \cite{Sorger2007} showed that iterative decoding
is improved when discriminating code words $\mathbf{c}$ by their
correlation $\mathbf{c}\mathbf{r}^{T}$ or $\mathbf{c}\mathbf{w}^{T}$
with the received word $\mathbf{r}$ or a `believed' word $\mathbf{w}$,
respectively. Not only symbol probabilities are considered, but also
the distribution of these probabilities over the correlation value.
An efficient algorithm is introduced using the first two moments to
approximate these distributions. 

In this paper we propose algorithms to compute both such distributions
and their moments in the trellis.

\begin{example}
\label{exa:SymbolDistribution}Consider Figure \ref{fig:Symbol-distribution}
\begin{figure}
\begin{centering}
\includegraphics{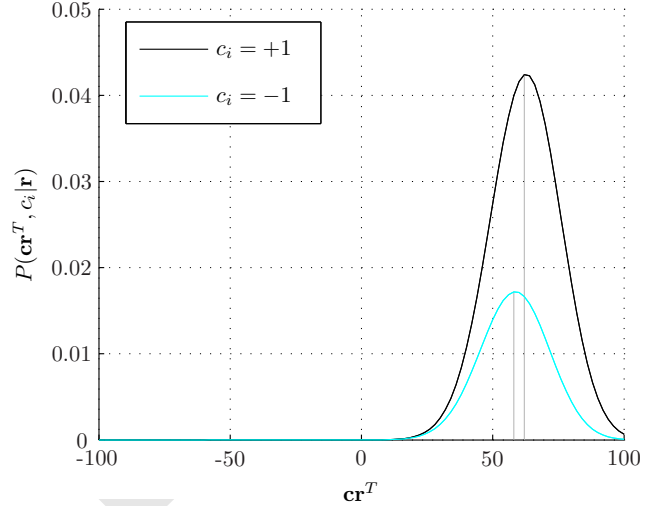}
\par\end{centering}

\caption{Symbol Distributions of Correlation $\mathbf{cr}^{T}$\label{fig:Symbol-distribution}}

\end{figure}
 which shows two distributions of the correlation function $\mathbf{c}\mathbf{r}^{T}$,
where $\mathbf{c}$ is a code word and $\mathbf{r}$ is the noisy
version of a code word $\check{\mathbf{c}}\in\sC$ after transmission
over a memory-less binary symmetric channel (BSC). The curves show
the distributions for $\mathbf{c}\in\sC_{i}(+1)$ and $\mathbf{c}\in\sC_{i}(-1)$,
respectively, where $\sC_{i}(x):=\left\{ \mathbf{c}\in\sC:c_{i}=x\right\} $
denotes the sub-code of $\sC$ for which the symbol $c_{i}$ at a
given position $i$ of each code word equals $\MsymS\in\{-1,+1\}$.
The integrals over the distributions equal the symbol probabilities
$P(c_{i}=x|\mathbf{r})$. However, the probability ratio\begin{equation}
\frac{P(\mathbf{c}\mathbf{r}^{T},c_{i}=+1|\mathbf{r})}{P(\mathbf{c}\mathbf{r}^{T},c_{i}=-1|\mathbf{r})}\label{eq:intro-ratio}\end{equation}
varies significantly over $\mathbf{cr}^{T}$ which can be exploited
when knowledge on the correlation $\check{\mathbf{c}}\mathbf{r}^{T}$
with the transmitted code word is available.

The distributions in Figure \ref{fig:Symbol-distribution} can be
approximated with their moments\begin{equation}
\fE_{\mathbf{C}}\left[\left.\left(\mathbf{cr}^{T}\right)^{\Midx}\right|\mathbf{r},c_{i}\right]:=\sum_{\mathbf{c}\in\sC}\left(\mathbf{cr}^{T}\right)^{\Midx}\cdot P(\mathbf{c}|\mathbf{r},c_{i})\label{eq:intro001}\end{equation}
up to a certain order $m$, where $\fE_{\mathbf{C}}[.]$ is the expectation
over all code words $\mathbf{c}\in\sC$. The distributions will be
\noun{Gauss}ian for sufficiently long codes which can be understood
by the law of large numbers. Hence we can expect the first two moments
to suffice for a good approximation.\medskip{}

\end{example}
We present generalizations of the methods in \cite{McEliece1996}
which enable us to compute distributions $P(\mathbf{c}\mathbf{w}^{T},c_{i}=+1|\mathbf{r})$
and expressions like $\fE_{\mathbf{C}}\left[\left(\mathbf{cw}^{T}\right)^{\Midx}|\mathbf{r},c_{i}\right]$
for some word $\mathbf{w}$, whereof (\ref{eq:intro001}) is a special
case, both for hard and soft decision. The complexity of the algorithm
is of the same order as the classically used BCJR algorithm.

The remainder of this paper is structured as follows. The next section
contains a review of common terminology in the context of trellises.
This is extended in Section \ref{sec:Trellis-Computations}, which
deals with the computation of distributions and their moments in a
more general frame. In Section \ref{sec:Applications} we will return
to the original problem by transferring the results of Section \ref{sec:Trellis-Computations}
to linear block codes and calculate the conditional entropy in the
trellis.

\section{\label{sec:Definitions}Definitions}

We deliberately follow to a wide extent the notation and style of
McEliece. The first paragraph is an excerpt from \cite{McEliece1996}
with minor modifications.%
\footnote{In contrast to \cite{McEliece1996} we restrict our definitions and
derivations to the set of real numbers.%
}

A \emph{trellis} $T=(\sV,\sE)$ of rank $n$ is a finite-directed
graph with vertex set $\sV$ and edge set $\sE$, in which every vertex
is assigned a \emph{depth} in the range $\{0,1,\dots,n\}$. Each edge
is connecting a vertex at depth $i-1$ to one at depth $i$, for some
$i\in\{0,1,\dots,\Trank\}$. Multiple edges between vertices are allowed.
The set of vertices at depth $i$ is denoted by $\sV_{i}$, so that
$\sV=\bigcup_{i=0}^{\Trank}\sV_{i}$. For $v\in\sV_{i}$ we write
$\mathrm{depth}(v)=i$. The set of edges connecting vertices at depth
$i-1$ to those at depth $i$ is denoted $\sE_{i-1,i},$ so that $\sE=\bigcup_{i=1}^{\Trank}\sE_{i-1,i}$.
There is only one vertex at depth $0$, called $A$, and only one
at depth $\Trank$, called $B$. If $e\in\sE$ is a directed edge
connecting the vertices $u$ and $v$, which we denote by $e:u\rightarrow v$,
we call $u$ the \emph{initial vertex}, and $v$ the \emph{final vertex}
of $e$ and write $\mathrm{init}(e)=u$, $\mathrm{fin(e})=v$. We
denote the number of edges leaving a vertex $v$ by $\rho^{+}(v)$,
and the number of edges entering a vertex $v$ by $\rho^{-}(v)$,
i.e.\begin{eqnarray*}
\rho^{+}(v) & = & |\{e:\mathrm{init}(e)=v\}|\\
\rho^{-}(v) & = & |\{e:\mathrm{fin}(e)=v\}|.\end{eqnarray*}
If $u$ and $v$ are vertices, a \emph{path} $\PP$ of \emph{length}
$L$ from $u$ to $v$ is a sequence of $L$ edges: $\PP=e_{1}e_{2}\cdots e_{L}$,
such that $\mathrm{init}(e_{1})=u$, $\mathrm{fin}(e_{L})=v$, and
$\mathrm{fin}(e_{i})=\mathrm{init}(e_{i+1})$, for $i=1,2,\dots,L-1$.
If $\PP$ is such a path, we sometimes write $\PP:u\rightarrow v$
for short, as well as $\mathrm{init}(\PP)=\mathrm{init}(e_{1})$ and
$\mathrm{fin}(\PP)=\mathrm{fin}(e_{L})$. We denote the set of paths
from vertices at depth $i$ to vertices at depth $j$ by $\sE_{i,j}$.
We assume that for every vertex $v\neq A,B$, there is at least one
path from $A$ to $v$, and at least one path from $v$ to $B$.

\begin{example}[Trellis]
\label{exa:Trellis}Figure \ref{fig:Trellis-SPC} %
\begin{figure}
\begin{centering}
\includegraphics[scale=0.75]{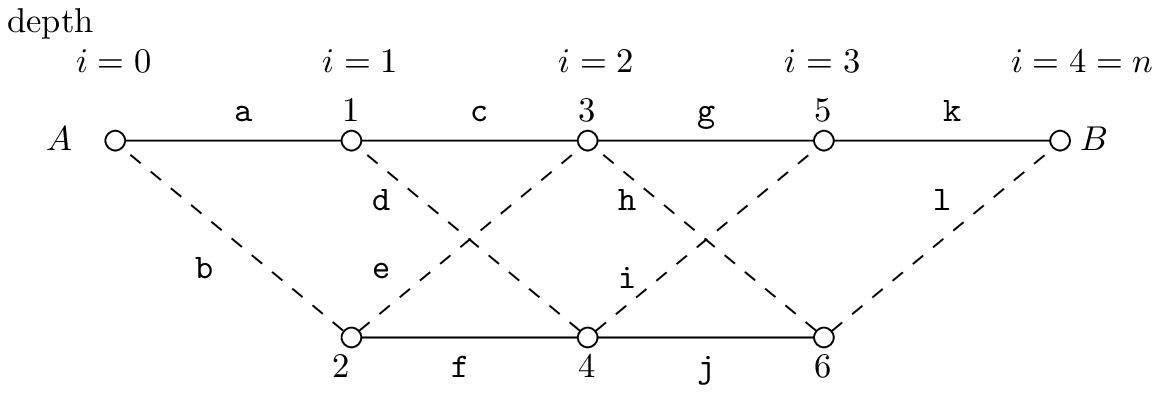}
\par\end{centering}

\caption{\label{fig:Trellis-SPC}Trellis of rank $\Trank=4$ with vertex set
$\sV=\{A,1,2,3,4,5,6,B\}$ and edge set $\sE=\{\mathtt{a},\mathtt{b},\mathtt{c},\mathtt{d},\mathtt{e},\mathtt{f},\mathtt{g},\mathtt{h},\mathtt{i},\mathtt{j},\mathtt{k},\mathtt{l}\}$}

\end{figure}
 shows a trellis of rank $\Trank=4$ with edge set $\sE=\{\mathtt{a},\mathtt{b},\mathtt{c},\mathtt{d},\mathtt{e},\mathtt{f},\mathtt{g},\mathtt{h},\mathtt{i},\mathtt{j},\mathtt{k},\mathtt{l}\}$
and vertex set $\sV=\{A,1,2,3,4,5,6,B\}$. There are eight paths $\PP:A\rightarrow B$
from $A$ to $B$. There is $\rho^{-}(1)=1$ edge entering (edge $\mathtt{a}$)
and $\rho^{+}(1)=2$ edges (edges $\mathtt{c}$ and $\mathtt{d}$)
leaving vertex $v=1$. 
\end{example}
We assume each edge in the trellis is \emph{labeled}. Let $T=(\sV,\sE)$
be a trellis of rank $\Trank$, such that each edge $e\in\sE$ is
labeled with a real valued number $\lambda(e)\in\sR$. We now define
the label of a path, and the flow between two vertices.

\begin{definitn}[Path Labels]
\label{def:label_of_path}The \emph{label} $\lambda(\PP)$ of a path
$\PP=e_{1}e_{2}\cdots e_{L}$ is defined as the product $\lambda(\PP)=\lambda(e_{1})\cdot\lambda(e_{2})\cdot\dots\,\cdot\lambda(e_{L})$
of the labels of all edges in the path. (Note that the subscript indicates
the sequence number rather than the edge's depth.)
\end{definitn}

\begin{definitn}[Flow]
\label{def:flow}If $u$ and $v$ are vertices in a labeled trellis,
we define the \emph{flow} $\vFlow{u}{v}$ from $u$ to $v$ to be
the sum of the labels on all paths from $u$ to $v$, i.e., \[
\vFlow{u}{v}=\sum_{\PP:u\rightarrow v}\lambda(\PP)\,.\]

\end{definitn}
In this paper, we only consider operations on the set of real numbers
with ordinary addition and multiplication as the authors are not aware
of application for other algebraic structures. However, Appendix \ref{sub:Appendix-Semi-Ring}
briefly shows that the algorithm can be transferred to any commutative
semi-ring, thus leading to a generalization of the Viterbi algorithm
\cite{Viterbi1967}.

\begin{example}
\label{exa:ChannelProbabilities}We continue Example \ref{exa:Trellis}.
The trellis depicted in Figure \ref{fig:Trellis-SPC} is the trellis
of the $(4,3,2)$ single parity check code. In the BCJR algorithm,
the edge labels $\lambda(e)$ are the channel probabilities of the
corresponding transitions.
\end{example}

\section{\label{sec:Trellis-Computations}Trellis-Based Computations}

In this section we consider distributions of the type \[
\mathcal{D}:\,\vu\mapsto D(\vu)=\sum_{{\PP:A\rightarrow B\atop f(\PP)=\vu}}\lambda\left(\PP\right)\]
 for special functions $f$, i.e., $\vu$ is mapped to the sum of
the labels of all paths $\PP$ with $f(\PP)=\vu$. We present an algorithm
to calculate these distributions over all paths of a trellis or a
sub-set of these. Before, however, we develop algorithms to calculate
the moments \[
\MNtrellis{\Midx}{\TT}:=\frac{\sum_{\PP}\left(f(\PP)\right)^{\Midx}\cdot\lambda(\PP)}{\sum_{\PP}\lambda(\PP)}\]
and - by introducing a constraint on the paths - the symbol moments
\[
\MNsymbol{\Midx}{i}{\TT,\MsymS}:=\frac{{\displaystyle \sum_{{\PP:A\rightarrow B\atop c(e_{i})=x}}}\left(f(\PP)\right)^{\Midx}\cdot\lambda(\PP)}{{\displaystyle \sum_{{\PP:A\rightarrow B\atop c(e_{i})=x}}}\lambda(\PP)}\]
 of such distributions in the trellis. We show that the complexity
of the moment calculation algorithm is $O(|\sE|)$, where $|\sE|$
is the number of edges in the trellis. 

To each edge $e\in\sE$ of the trellis $\TT$ we introduce a second
label $\fs(e)\in\sR$, which we will refer to as the \emph{c-label}.
For distinction, we will call $\lambda(e)$ the \emph{$\lambda$-label}.

\begin{example}
We continue Example \ref{exa:ChannelProbabilities}. Solid lines correspond
to the c-label $\fs(e)=1$, dashed lines correspond to $\fs(e)=-1$
(bipolar binary notation). E.g., the path $\PP=\mathtt{a}\mathtt{d}\mathtt{i}\mathtt{k}$
has the c-label $\mathbf{\fs}(\PP)=+1\,-1\,-1\,+1$ which is a code
word.
\end{example}
Let \[
g_{i}\left(\fs(e)\right):x\mapsto y;\, x,y\in\sR\]
 be a common function of $\fs(e)$ for all edges $e\in\sE_{i-1,i}$.
Further, let \[
f\left(\mathbf{\fs}(\PP)\right)=f\left(\fs(e_{1}),\fs(e_{2}),\dots,\fs(e_{L})\right):\mathbf{\fs}\mapsto y;\,\fs(e_{i}),y\in\sR\]
 be a function of the c-labels of the edges of a path $\PP$ with
length $L$. The bold letter indicates that $\mathbf{\fs}$ is a vector.
For simplicity, in the following we will abbreviate $g_{i}\left(\fs(e)\right)$
and $f\left(\mathbf{\fs}(\PP)\right)$ by $g_{i}(e)$ and $f(\PP)$,
respectively. The functions $f(\PP)$ have to fulfill the linearity
criterion \begin{equation}
f(\PP)=f(e_{1}e_{2}\cdots e_{n})=g_{1}(e_{1})+g_{2}(e_{2})+\dots+g_{n}(e_{n})\label{eq:property_of_f}\end{equation}
 for all paths $\PP:A\rightarrow B$.

\begin{definitn}[Forward Numerator]
\label{def:forward_trellis_moment}We define the $\Midx$-th \emph{forward
numerator} of a function $f$ at vertex $v$ of a trellis $T$ as\begin{equation}
\alpha^{(\Midx)}(v):=\sum_{\PP:A\rightarrow v}\left(f(\PP)\right)^{\Midx}\cdot\lambda(\PP)\label{eq:alpha_definition}\end{equation}
with initial values\[
\alpha^{(\Midx)}(A):=\left\{ \begin{array}{ccc}
1 & : & \Midx=0\\
0 & : & \Midx>0\end{array}\right.\;.\]

\end{definitn}
\medskip{}

\begin{thm}[Forward Recursion]
\label{thm:forward-moment}The $\Midx$-th forward numerator $\alpha^{(\Midx)}(v)$
of a vertex $v\in V_{i}$ on depth $i$ can be  recursively calculated
on a trellis $T$ by\begin{equation}
\alpha^{(\Midx)}(v)=\hspace{-1.5mm}\sum_{e:\mathrm{fin}(e)=v}\hspace{-1.5mm}\lambda(e)\cdot\sum_{\Mrunidx=0}^{\Midx}{\Midx \choose \Mrunidx}\left(g_{i}(e)\right)^{\Mrunidx}\cdot\alpha^{(\Midx-\Mrunidx)}\left(\mathrm{init}(e)\right)\label{eq:alpha_recursive_formula}\end{equation}
 as in Algorithm \ref{alg:Proposed-Algorithm}.
\end{thm}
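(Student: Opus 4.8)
The plan is to establish~(\ref{eq:alpha_recursive_formula}) directly from Definition~\ref{def:forward_trellis_moment} by decomposing each path according to its final edge. Since $v\in\sV_i$, every path $\PP:A\rightarrow v$ has length $i$ and factors uniquely as $\PP=\PP'e$, where $e$ is an edge with $\mathrm{fin}(e)=v$ and $\PP':A\rightarrow\mathrm{init}(e)$ is a path to the penultimate vertex at depth $i-1$. First I would record the two identities that drive the argument: the label is multiplicative under this splitting, $\lambda(\PP)=\lambda(\PP')\cdot\lambda(e)$ by Definition~\ref{def:label_of_path}; and the linearity criterion~(\ref{eq:property_of_f}), read on the partial path, gives $f(\PP)=f(\PP')+g_i(e)$, since appending the depth-$i$ edge $e$ adds exactly the summand $g_i(e)$ to the partial sum $g_1(e_1)+\dots+g_{i-1}(e_{i-1})=f(\PP')$.

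Substituting both identities into the definition and grouping the paths by their final edge yields
\[
\alpha^{(\Midx)}(v)=\sum_{e:\mathrm{fin}(e)=v}\lambda(e)\sum_{\PP':A\rightarrow\mathrm{init}(e)}\bigl(f(\PP')+g_i(e)\bigr)^{\Midx}\lambda(\PP').
\]
The central step is then to expand the $\Midx$-th power by the binomial theorem, $\bigl(f(\PP')+g_i(e)\bigr)^{\Midx}=\sum_{\Mrunidx=0}^{\Midx}{\Midx\choose\Mrunidx}\bigl(g_i(e)\bigr)^{\Mrunidx}\bigl(f(\PP')\bigr)^{\Midx-\Mrunidx}$, and to interchange the resulting finite sum over $\Mrunidx$ with the sum over $\PP'$. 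After this exchange the inner path sum $\sum_{\PP':A\rightarrow\mathrm{init}(e)}\bigl(f(\PP')\bigr)^{\Midx-\Mrunidx}\lambda(\PP')$ is, by Definition~\ref{def:forward_trellis_moment}, precisely $\alpha^{(\Midx-\Mrunidx)}(\mathrm{init}(e))$, which delivers~(\ref{eq:alpha_recursive_formula}).

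I do not expect a serious obstacle, since the argument is a direct manipulation; the two points needing care are essentially bookkeeping. The first is justifying that $f$ restricted to a partial path equals the corresponding partial sum, so that $f(\PP)=f(\PP')+g_i(e)$ is legitimate: this is forced by the additive, edge-wise form of~(\ref{eq:property_of_f}), but it should be stated explicitly because that criterion is phrased for full paths $A\rightarrow B$. The second is checking the recursion against the boundary data at depth $i=1$, where each path to $v$ is a single edge $e$; since $\alpha^{(\Midx-\Mrunidx)}(A)$ vanishes unless $\Mrunidx=\Midx$, the formula collapses to $\alpha^{(\Midx)}(v)=\sum_{e:\mathrm{fin}(e)=v}\bigl(g_1(e)\bigr)^{\Midx}\lambda(e)$, in agreement with the definition and the stated initial values. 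This confirms that the recursion, together with those initial values, correctly propagates the forward numerators by induction on depth from $0$ up to $\Trank$.
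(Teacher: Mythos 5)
Your proposal is correct and follows essentially the same route as the paper's proof: both split each path $\PP:A\rightarrow v$ at its final edge as $\PP=\PP'e$, use $\lambda(\PP)=\lambda(\PP')\lambda(e)$ and $f(\PP)=f(\PP')+g_i(e)$, apply the binomial theorem, and recognize the inner path sum as $\alpha^{(\Midx-\Mrunidx)}(\mathrm{init}(e))$. The only difference is presentational --- the paper packages the algebra inside an explicit induction on $\mathrm{depth}(v)$ verifying the algorithm's computed values, while you derive the identity directly from Definition~\ref{def:forward_trellis_moment} and note afterwards that induction with the initial values propagates correctness --- which is mathematically the same argument.
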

\begin{algorithm}[h]
\caption{\label{alg:Proposed-Algorithm}Computation of first ($\vmmax$+1)
Forward Numerators}

\texttt{01: /{*} initialization {*}/}

\texttt{02: }$\alpha^{(0)}(A)=1;$

\texttt{03: for (m=1 to m\_max)}

\texttt{04: }~~$\alpha^{(\Midx)}(A)=0$;

\texttt{05: /{*} recursion {*}/}

\texttt{06: for (i=1 to n) \{}

\texttt{07: }~\texttt{for (}$v\in\sV_{i}$\texttt{) \{}

\texttt{08: }~~\texttt{for (m=0 to m\_max)}

\texttt{09: }~~~$\alpha^{(\Midx)}(v)={\displaystyle \sum_{e:\mathrm{fin}(e)=v}}\lambda(e)\cdot{\displaystyle \sum_{\Mrunidx=0}^{\Midx}}{\Midx \choose \Mrunidx}\left(g_{i}(e)\right)^{\Mrunidx}\cdot$

\vspace{1mm}
$\hspace{63mm}\cdot\alpha^{(\Midx-\Mrunidx)}\left(\mathrm{init}(e)\right);$

\texttt{10: }~~\texttt{\}}

\texttt{11: }~\texttt{\}}

\texttt{12: \}}
\end{algorithm}

\begin{proof}
The proof is by induction on $\mathrm{depth}(v)$. For $\mathrm{depth}(v)=1$,
it follows from the definition of a trellis that all paths from $A$
to $v$ must consist of just one edge $e$, with $\mathrm{init}(e)=A$
and $\mathrm{fin}(e)=v$. Thus the true value of $\alpha^{(\Midx)}(v)$
is the sum of the $\lambda$-labels on all edges $e$ joining $A$
to $v$, weighted by $\left(g_{1}(e)\right)^{\Midx}$.  On the other
hand, when the algorithm computes $\alpha^{(\Midx)}(v)$ on line 9,
the value it assigns to it is (because of the initialization $\alpha^{(0)}(A)=1$,
$\alpha^{(\Midx)}(v)=0$ for $\Midx>0$)\begin{eqnarray*}
\alpha^{(\Midx)}(v) & \hspace{-2mm}= & \hspace{-4mm}\sum_{e:\mathrm{fin}(e)=v}\hspace{-1mm}\lambda(e)\cdot\sum_{\Mrunidx=0}^{\Midx}{\Midx \choose \Mrunidx}\left(g_{i}(e)\right)^{\Mrunidx}\cdot\alpha^{(\Midx-\Mrunidx)}\left(\mathrm{init}(e)\right)\\
 & \hspace{-2mm}= & \hspace{-2mm}\sum_{e:A\rightarrow v}\lambda(e)\cdot\left(g_{1}(e)\right)^{\Midx}\cdot1\end{eqnarray*}
which is, as required, the sum of the labels on all edges $e$ joining
$A$ to $v$, weighted by $\left(g_{1}(e)\right)^{\Midx}$. Thus the
algorithm works correctly for all vertices $v$ with $\mathrm{depth}(v)=1$
and any $\Midx\geq0$.

Assuming now that the assertion is true for all vertices at depth
$i$ or less and all $\Midx\leq M$, a vertex $v$ at depth $i+1$
is considered. When the algorithm computes $\alpha^{(\Midx)}(v)$
on line $9$, the value it assigns to it is \begin{equation}
\alpha^{(\Midx)}(v)=\sum_{e:\mathrm{fin}(e)=v}\lambda(e)\cdot\sum_{\Mrunidx=0}^{\Midx}{\Midx \choose \Mrunidx}\left(g_{i}(e)\right)^{\Mrunidx}\cdot\alpha^{(\Midx-\Mrunidx)}\left(\mathrm{init}(e)\right)\;.\label{eq:proof_line_star}\end{equation}
But $\mathrm{depth}(\mathrm{init}(e))=i$ and so by the induction
hypothesis\begin{equation}
\alpha^{(\Midx)}(\mathrm{init}(e))=\sum_{\PP:A\rightarrow\mathrm{init}(e)}\lambda(\PP)\cdot\left(f(\PP)\right)^{\Midx}\;.\label{eq:proof_induction_hypothesis}\end{equation}
Combining (\ref{eq:proof_line_star}) and (\ref{eq:proof_induction_hypothesis}),
we have{\small \begin{eqnarray*}
\alpha^{(\Midx)}(v) & \hspace{-2mm}= & \hspace{-4.5mm}\sum_{e:\mathrm{fin}(e)=v}\hspace{-2mm}\lambda(e)\cdot\sum_{\Mrunidx=0}^{\Midx}{\Midx \choose \Mrunidx}\left(g_{i}(e)\right)^{\Mrunidx}\cdot\hspace{-4mm}\sum_{\PP:A\rightarrow\mathrm{init}(e)}\hspace{-3.5mm}\lambda(\PP)\cdot\left(f(\PP)\right)^{\Midx-\Mrunidx}\\
 & \hspace{-2mm}= & \hspace{-4.5mm}\sum_{e:\mathrm{fin}(e)=v}\sum_{\PP:A\rightarrow\mathrm{init}(e)}\hspace{-4mm}\lambda(e)\hspace{-0.5mm}\cdot\hspace{-0.5mm}\lambda(\PP)\hspace{-0.5mm}\cdot\hspace{-0.5mm}\sum_{\Mrunidx=0}^{\Midx}{\Midx \choose \Mrunidx}\hspace{-0.5mm}\left(g_{i}(e)\right)^{\Mrunidx}\hspace{-1mm}\cdot\hspace{-0.5mm}\left(f(\PP)\right)^{\Midx-\Mrunidx}\hspace{-1.5mm}.\end{eqnarray*}
}Using the binomial theorem we obtain\begin{equation}
\alpha^{(\Midx)}(v)=\hspace{-2mm}\sum_{e:\mathrm{fin}(e)=v}\sum_{\PP:A\rightarrow\mathrm{init}(e)}\lambda(\PP e)\cdot\left(f(\PP)+g_{i}(e)\right)^{\Midx}.\label{eq:proof_last_line}\end{equation}
But every path from $A$ to $v$ must be of the form $\PP e$, where
$\PP$ is a path from $A$ to a vertex $u$ with $\mathrm{depth}(u)=i$,
$\mathrm{init}(e)=u$ and $\mathrm{fin}(e)=v$. Thus by (\ref{eq:proof_last_line}),
$\alpha^{(\Midx)}(v)$ is correctly calculated by the algorithm.
\end{proof}
\medskip{}

\begin{remrk}[Flow]
$\alpha^{(0)}(v)$ in (\ref{eq:alpha_definition}) is the flow $\vFlow{A}{v}$
from $A$ to $v$ (cf. Definition \ref{def:flow}) as it is calculated
by the BCJR algorithm.

\end{remrk}

\begin{remrk}
$f$ and $g_{\Mdeptha}$ do not necessarily have to be scalars. Theorem
\ref{thm:forward-moment} holds for all separable linear functions
$f$ fulfilling Equation (\ref{eq:property_of_f}).
\end{remrk}
\begin{thm}[Complexity]
\label{thm:MomentCalculationComplexity}The proposed moment computing
algorithm requires $O(|\sE|)$ arithmetic operations, i.e. multiplications
and additions.
\end{thm}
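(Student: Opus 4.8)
The plan is to read the arithmetic cost directly off the loop structure of Algorithm~\ref{alg:Proposed-Algorithm} and then collapse the vertex sum using the incidence identity $\sum_{v\in\sV}\rho^{-}(v)=|\sE|$. First I would argue that a single evaluation of the summand in \eqref{eq:alpha_recursive_formula}, for fixed $v$, order $\Midx$, edge $e$, and inner index $\Mrunidx$, costs only a constant number of multiplications and additions. This rests on two bookkeeping observations: the binomial coefficients ${\Midx \choose \Mrunidx}$ for $0\le\Mrunidx\le\Midx\le\vmmax$ form a trellis-independent table computed once (in $O(\vmmax^{2})$ operations, hence $O(1)$ relative to $|\sE|$), and the powers $\left(g_{i}(e)\right)^{\Mrunidx}$ needed for one edge are built incrementally via $\left(g_{i}(e)\right)^{\Mrunidx}=\left(g_{i}(e)\right)^{\Mrunidx-1}\cdot g_{i}(e)$, so that all powers up to order $\vmmax$ cost $O(\vmmax)$ multiplications per edge rather than being recomputed from scratch.

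Next I would tally the nested loops. For a fixed vertex $v$ and fixed order $\Midx$, line~9 ranges over the $\rho^{-}(v)$ edges entering $v$ and, for each, over the $\Midx+1$ values of $\Mrunidx$, giving $O\!\left(\rho^{-}(v)\,(\Midx+1)\right)$ operations. Summing over the orders $\Midx=0,\dots,\vmmax$ yields $O\!\left(\rho^{-}(v)\,\vmmax^{2}\right)$ for vertex $v$, since $\sum_{\Midx=0}^{\vmmax}(\Midx+1)=O(\vmmax^{2})$. Summing over all vertices and invoking $\sum_{v\in\sV}\rho^{-}(v)=|\sE|$ (each edge is the final edge of exactly one vertex) produces a total of $O\!\left(\vmmax^{2}\,|\sE|\right)$ operations, into which the $O(\vmmax\,|\sE|)$ cost of the incremental powers and the $O(\vmmax^{2})$ cost of the binomial table are both absorbed.

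Finally, since the algorithm is run only up to a fixed moment order $\vmmax$ — in the intended application the first two moments suffice — $\vmmax$ is a constant independent of the trellis size, and the bound collapses to $O(|\sE|)$, matching the linear complexity of the BCJR flow computation. The step I expect to demand the most care is the per-summand constant-cost claim: without the incremental evaluation of the powers $\left(g_{i}(e)\right)^{\Mrunidx}$ and the precomputed binomial table, a naive reading of \eqref{eq:alpha_recursive_formula} could smuggle in extra factors of $\vmmax$, so these two remarks must be made explicit before the vertex-degree sum can be cleanly collapsed to $|\sE|$.
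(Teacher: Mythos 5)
Your proposal is correct and follows essentially the same route as the paper's own proof: bound the cost of line~9 per vertex and moment order (with powers of $g_{i}(e)$ computed incrementally, exactly as the paper does when it charges $|\sE|\cdot\max(\vmmax-1,0)$ multiplications for all powers), sum over the orders to get a factor quadratic in $\vmmax$, and collapse the vertex sum via $\sum_{v}\rho^{-}(v)=|\sE|$ with $\vmmax$ treated as a constant. The only difference is presentational — the paper tracks exact operation counts such as $\left(\tfrac{3}{2}\vmmax^{2}+\tfrac{9}{2}\vmmax+2\right)\cdot|\sE|$ where you use $O(\cdot)$ bounds, and it leaves the binomial-coefficient table implicit where you account for it explicitly.
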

\begin{proof}
The calculation of the powers of $g_{i}(e)$ up to a maximum moment
$\vmmax$ for all edges $e\in\sE$ requires $|\sE|\cdot\max(\vmmax-1,0)$
multiplications and no additions. We do not consider the operations
needed for calculating $g_{i}(e)$ here. The execution of the sum
term over $l$ in line $9$ of the algorithm requires $\Midx$ additions,
$2\Midx+1$ multiplications for $\Midx>0$ %
\footnote{For $\Mrunidx=0$, $\left(g_{i}(e)\right)^{\Mrunidx}=1$ and thus
only one multiplication is necessary.%
} and no multiplications for $\Midx=0$. Therefore line $9$ requires
\[
\rho^{-}(v)\cdot[1+2\Midx+1]=\rho^{-}(v)\cdot2(\Midx+1)\]
 multiplications for $\Midx>0$, $\rho^{-}(v)$ multiplications for
$\Midx=0$, and $\rho^{-}(v)-1+\rho^{-}(v)\cdot\Midx$ additions.
Hence, for a vertex $v\in V_{i}$, \[
\rho^{-}(v)+\sum_{\Midx=1}^{\vmmax}\rho^{-}(v)\cdot2(\Midx+1)=\rho^{-}(v)\cdot(\vmmax^{2}+3\vmmax+1)\]
  multiplications and {\small \[
\sum_{\Midx=0}^{\vmmax}\left(\rho^{-}(v)\cdot(\Midx+1)-1\right)=\rho^{-}(v)\cdot\left(\frac{1}{2}\vmmax^{2}+\frac{3}{2}\vmmax+1\right)-(\vmmax+1)\]
}additions are necessary. The total number of multiplications required
by the algorithm is thus\begin{equation}
\mathtt{mult}=(\vmmax^{2}+3\vmmax+1)\cdot\sum_{i=1}^{n}\sum_{v\in V_{i}}\rho^{-}(v)\label{eq:multiplications_sum}\end{equation}
and the total number of additions is\begin{eqnarray}
\mathtt{add} & \hspace{-1mm}= & \hspace{-1mm}\sum_{i=1}^{n}\sum_{v\in\sV_{i}}\left(\rho^{-}(v)\cdot\left(\frac{1}{2}\vmmax^{2}+\frac{3}{2}\vmmax+1\right)-(\vmmax-1)\right)\nonumber \\
 & \hspace{-1mm}= & \hspace{-1mm}\left(\frac{1}{2}\vmmax^{2}+\frac{3}{2}\vmmax+1\right)\cdot\sum_{i=1}^{n}\sum_{v\in\sV_{i}}\rho^{-}(v)-\nonumber \\
 &  & \hspace{30mm}-(\vmmax-1)\cdot\sum_{i=1}^{n}\sum_{v\in\sV_{i}}1\;.\label{eq:additions_sum}\end{eqnarray}
Every edge in $\sE$ is counted exactly once in the sum in (\ref{eq:multiplications_sum}),
since if $e:u\rightarrow v$, then $\mathrm{fin}(e)\in\sV_{i}$ for
exactly one value of $i\in\{1,2,\dots,\Trank\}$. Thus the sum in
(\ref{eq:multiplications_sum}) is $|\sE|$. The second sum in (\ref{eq:additions_sum})
is $|\sV|-1$, since every vertex except $A$ is in $\bigcup_{i=1}^{\Trank}\sV_{i}$.
Thus from (\ref{eq:multiplications_sum}) and (\ref{eq:additions_sum}),
we have\begin{eqnarray*}
\mathtt{mult} & \hspace{-1mm}= & \hspace{-1mm}(\vmmax^{2}+3\vmmax+1)\cdot|\sE|\\
\mathtt{add} & \hspace{-1mm}= & \hspace{-1mm}\left(\frac{1}{2}\vmmax^{2}+\frac{3}{2}\vmmax+1\right)\cdot|\sE|-(\vmmax+1)\cdot(|\sV|-1)\end{eqnarray*}
so that the total number of arithmetic operations required by the
algorithm is \begin{eqnarray*}
\left(\frac{3}{2}\vmmax^{2}+\frac{9}{2}\vmmax+2\right)\cdot|\sE|-(\vmmax+1)\cdot|\sV|+\vmmax+1\\
\leq\left(\frac{3}{2}\vmmax^{2}+\frac{9}{2}\vmmax+2\right)\cdot|\sE|.\end{eqnarray*}
 We have $|\sV|\geq1$, and $|\sE|-|\sV|+1\geq0$ (since the trellis
is connected), so that the total number of operations required is
bounded above by $\left(\frac{3}{2}\vmmax^{2}+\frac{9}{2}\vmmax+2\right)\cdot|\sE|$
and bounded below by $\left(\frac{3}{2}\vmmax^{2}+\frac{7}{2}\vmmax+1\right)\cdot|\sE|$
(disregarding the complexity of the computation of $g_{i}(e)$).
\end{proof}
\medskip{}

In analogy to the forward numerator in Definition \ref{def:forward_trellis_moment}
we can also define a backward numerator.

\begin{definitn}[Backward Numerator]
The $\Midx$-th \emph{backward numerator} of a vertex $v\in\sV_{i}$
is defined as \[
\beta^{(\Midx)}(v):=\sum_{\PP:v\rightarrow B}\left(f(\PP)\right)^{\Midx}\cdot\lambda(\PP)\]
with initial values\[
\beta^{(\Midx)}(B)=\left\{ \begin{array}{ccc}
1 & : & \Midx=0\\
0 & : & \Midx>0\end{array}\right.\;.\]

\end{definitn}
\begin{thm}[Backward Recursion]
\label{thm:BackwardMoment}The $\Midx$-th backward numerator $\beta^{(\Midx)}(v)$
of a vertex $v\in\sV_{i}$ can be calculated in a trellis $T$ by\[
\beta^{(\Midx)}(v)=\hspace{-1.5mm}\sum_{e:\mathrm{init}(e)=v}\hspace{-2mm}\lambda(e)\cdot\sum_{\Mrunidx=0}^{\Midx}{\Midx \choose \Mrunidx}\left(g_{i+1}(e)\right)^{\Mrunidx}\cdot\beta^{(\Midx-\Mrunidx)}\left(\mathrm{fin}(e)\right).\]

\end{thm}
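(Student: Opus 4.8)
The plan is to prove this by \emph{reverse} induction on $\mathrm{depth}(v)$, mirroring the proof of Theorem \ref{thm:forward-moment} but decomposing each path at its \emph{first} edge rather than its last. The two ingredients are the unique decomposition of a path $\PP:v\rightarrow B$ as $\PP=e\PP'$, where $e$ is an edge with $\mathrm{init}(e)=v$ and $\PP':\mathrm{fin}(e)\rightarrow B$, together with the linearity property (\ref{eq:property_of_f}), which here reads $f(e\PP')=g_{\Mdeptha+1}(e)+f(\PP')$ for $v\in\sV_{\Mdeptha}$. The index $\Mdeptha+1$ is the point to watch: since $v$ lies at depth $\Mdeptha$, every edge leaving $v$ belongs to $\sE_{\Mdeptha,\Mdeptha+1}$, so the relevant per-edge function is $g_{\Mdeptha+1}$, in contrast to the $g_{\Mdeptha}$ appearing in the forward recursion.

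For the base case I would take $v$ at depth $\Trank-1$, the vertex $B$ at depth $\Trank$ being fixed by the initial values. Every path $\PP:v\rightarrow B$ then consists of a single edge $e:v\rightarrow B$, so the definition gives $\beta^{(\Midx)}(v)=\sum_{e:v\rightarrow B}\lambda(e)\,(g_{\Trank}(e))^{\Midx}$. Evaluating the claimed recursion at such a $v$, the inner sum collapses because $\beta^{(\Midx-\Mrunidx)}(B)$ vanishes unless $\Mrunidx=\Midx$, leaving exactly $\sum_{e:v\rightarrow B}\lambda(e)\,(g_{\Trank}(e))^{\Midx}$, which agrees.

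For the inductive step I would assume the formula reproduces the definition for all vertices of depth $\Mdeptha+1$ or greater and all orders up to $\Midx$, and consider $v\in\sV_{\Mdeptha}$. Substituting the induction hypothesis $\beta^{(\Midx-\Mrunidx)}(\mathrm{fin}(e))=\sum_{\PP':\mathrm{fin}(e)\rightarrow B}\lambda(\PP')\,(f(\PP'))^{\Midx-\Mrunidx}$ into the right-hand side of the recursion, I would interchange the order of summation and apply the binomial theorem, $\sum_{\Mrunidx=0}^{\Midx}{\Midx \choose \Mrunidx}(g_{\Mdeptha+1}(e))^{\Mrunidx}(f(\PP'))^{\Midx-\Mrunidx}=(g_{\Mdeptha+1}(e)+f(\PP'))^{\Midx}=(f(e\PP'))^{\Midx}$. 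This turns the expression into $\sum_{e:\mathrm{init}(e)=v}\sum_{\PP':\mathrm{fin}(e)\rightarrow B}\lambda(e\PP')\,(f(e\PP'))^{\Midx}$, and since every path from $v$ to $B$ arises exactly once as such a concatenation $e\PP'$, this equals $\sum_{\PP:v\rightarrow B}\lambda(\PP)\,(f(\PP))^{\Midx}=\beta^{(\Midx)}(v)$, as required.

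I expect no genuine obstacle here: the argument is the exact time-reversal of the forward proof. The only points demanding care are bookkeeping rather than substance --- getting the depth subscript on $g$ right ($\Mdeptha+1$ for an outgoing edge at depth $\Mdeptha$), orienting the decomposition so that the \emph{leading} edge carries the factor $g_{\Mdeptha+1}(e)$, and verifying that the map $(e,\PP')\mapsto e\PP'$ is a bijection between pairs consisting of an edge out of $v$ and a tail-path to $B$ on the one hand and paths $v\rightarrow B$ on the other, so that no path is double-counted or omitted.
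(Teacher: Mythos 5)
Your proof is correct and is exactly the argument the paper intends: the paper's own proof of Theorem \ref{thm:BackwardMoment} consists of the single remark that it is analogous to the proof of Theorem \ref{thm:forward-moment}, and your reverse induction --- decomposing each path $\PP:v\rightarrow B$ at its leading edge, substituting the induction hypothesis, applying the binomial theorem, and checking the depth bookkeeping so that an edge leaving $v\in\sV_{i}$ carries $g_{i+1}(e)$ --- is precisely that analogy spelled out, including the correct base case at depth $\Trank-1$ collapsing via the initial values at $B$. There is no gap.
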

\begin{proof}
The proof is analog to the proof of Theorem \ref{thm:forward-moment}.
\end{proof}
\medskip{}

It obviously holds that $\alpha^{(\Midx)}(B)=\beta^{(\Midx)}(A)=:\Mtrellis{\Midx}{\TT}$,
providing the \emph{m}-th \emph{moment} \[
\MNtrellis{\Midx}{\TT}:=\frac{\Mtrellis{\Midx}{\TT}}{\Mtrellis{0}{\TT}}=\frac{{\displaystyle \sum_{\PP:A\rightarrow B}}\left(f(\PP)\right)^{\Midx}\cdot\lambda(\PP)}{{\displaystyle \sum_{\PP:A\rightarrow B}}\lambda(\PP)}\]
 of the distribution of function $f$ given $\TT$.

In analogy to the BCJR algorithm \cite{BCJR1974} for calculating
symbol probabilities, we next consider the calculation of moments
of $f$ introducing a constraint on the value of the c-labels at a
certain depth $i$ in the trellis. I.e., the moments are calculated
in a sub-trellis of $\TT$.

\begin{definitn}[Symbol Moment]
\label{def:SymbolMoment}We define the $\Midx$-th \emph{symbol
moment} $\MNsymbol{\Midx}{i}{\TT,\MsymS}$ at depth $i$ of a trellis
$\TT$ as \[
\MNsymbol{\Midx}{i}{\TT,\MsymS}:=\frac{{\displaystyle \sum_{{\PP:A\rightarrow B\atop c_{i}=x}}}\left(f(\PP)\right)^{\Midx}\cdot\lambda(\PP)}{{\displaystyle \sum_{{\PP:A\rightarrow B\atop c_{i}=x}}}\lambda(\PP)}\]
 where $c_{i}=c(e_{i})$and $e_{i}\in\sE_{i-1,i}$ is the i-th edge
of path $\PP$. 
\end{definitn}

\begin{thm}
\label{thm:symbol-moment}The $\Midx$-th symbol moment can be calculated
by\[
\MNsymbol{\Midx}{i}{\TT,\MsymS}=\frac{\Msymbol{\Midx}{i}{\TT,\MsymS}}{\Msymbol{0}{i}{\TT,\MsymS}}\]
 with \begin{multline}
\Msymbol{\Midx}{i}{\TT,\MsymS}=\sum_{{e\in\sE_{i-1,i}:\atop \fs(e)=\MsymS}}\lambda(e)\cdot\sum_{\Mrunidx=0}^{\Midx}{\Midx \choose \Mrunidx}\beta^{(\Midx-\Mrunidx)}(\mathrm{fin}(e))\cdot\\
\cdot\sum_{k=0}^{\Mrunidx}{\Mrunidx \choose k}\left(g_{i}(e)\right)^{k}\cdot\alpha^{(\Mrunidx-k)}(\mathrm{init}(e)).\label{eq:Theorem-SymbolMoment}\end{multline}

\end{thm}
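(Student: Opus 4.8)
The plan is to prove the claimed expression for the unnormalized symbol moment $\Msymbol{\Midx}{i}{\TT,\MsymS}$ directly, namely that the right-hand side of (\ref{eq:Theorem-SymbolMoment}) equals the numerator $\sum_{\PP:A\rightarrow B,\,c_i=\MsymS}(f(\PP))^{\Midx}\cdot\lambda(\PP)$ appearing in Definition \ref{def:SymbolMoment}; the stated identity for $\MNsymbol{\Midx}{i}{\TT,\MsymS}$ then follows by dividing by the $\Midx=0$ case. The starting observation is that every path $\PP:A\rightarrow B$ whose $i$-th edge $e_i$ satisfies $\fs(e_i)=\MsymS$ factors uniquely as $\PP=\PP_1\,e\,\PP_2$, where $e=e_i$ ranges over the edges in $\sE_{i-1,i}$ with $\fs(e)=\MsymS$, the forward segment $\PP_1$ runs $A\rightarrow\mathrm{init}(e)$, and the backward segment $\PP_2$ runs $\mathrm{fin}(e)\rightarrow B$. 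First I would use this bijective decomposition to rewrite the constrained sum over $\PP$ as a triple sum over $e$, $\PP_1$ and $\PP_2$.

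Next I would invoke the two structural properties of the labels. Multiplicativity of the path label (Definition \ref{def:label_of_path}) gives $\lambda(\PP)=\lambda(\PP_1)\cdot\lambda(e)\cdot\lambda(\PP_2)$, and the linearity criterion (\ref{eq:property_of_f}) splits the function value as $f(\PP)=f(\PP_1)+g_i(e)+f(\PP_2)$, where $f(\PP_1)$ and $f(\PP_2)$ collect the $g_j$-terms of the forward and backward segments respectively. Substituting these into the constrained numerator turns it into $\sum_{e:\fs(e)=\MsymS}\lambda(e)\sum_{\PP_1}\sum_{\PP_2}\lambda(\PP_1)\lambda(\PP_2)\bigl(f(\PP_1)+g_i(e)+f(\PP_2)\bigr)^{\Midx}$, so the whole task reduces to expanding this trinomial power.

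The core of the argument is then to match that power against the nested binomial sums in (\ref{eq:Theorem-SymbolMoment}). I would expand $\bigl((f(\PP_1)+g_i(e))+f(\PP_2)\bigr)^{\Midx}$ by the binomial theorem, grouping $f(\PP_1)+g_i(e)$ as a single summand; the outer index $\Mrunidx$ then carries the power of $f(\PP_1)+g_i(e)$, while the factor $f(\PP_2)^{\Midx-\Mrunidx}$ recombines with $\sum_{\PP_2}\lambda(\PP_2)$ into $\beta^{(\Midx-\Mrunidx)}(\mathrm{fin}(e))$ by the backward-numerator definition. A second application of the binomial theorem to $(f(\PP_1)+g_i(e))^{\Mrunidx}$ introduces the inner index $k$, so that $(g_i(e))^k$ separates off and $\sum_{\PP_1}\lambda(\PP_1)(f(\PP_1))^{\Mrunidx-k}$ collapses to $\alpha^{(\Mrunidx-k)}(\mathrm{init}(e))$ by the forward-numerator definition (\ref{eq:alpha_definition}). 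These two expansions reproduce (\ref{eq:Theorem-SymbolMoment}) termwise.

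The only delicate point is the bookkeeping of the two nested expansions: one must check that summing $\binom{\Midx}{\Mrunidx}\binom{\Mrunidx}{k}$ against $(g_i(e))^k\,f(\PP_1)^{\Mrunidx-k}\,f(\PP_2)^{\Midx-\Mrunidx}$ over $0\le k\le\Mrunidx\le\Midx$ is exactly the expansion of $\bigl(g_i(e)+f(\PP_1)+f(\PP_2)\bigr)^{\Midx}$, which amounts to the identity $\binom{\Midx}{\Mrunidx}\binom{\Mrunidx}{k}=\binom{\Midx}{k,\,\Mrunidx-k,\,\Midx-\Mrunidx}$. This is a routine consequence of applying the binomial theorem twice, so no genuinely hard step arises; what makes the forward and backward segments decouple cleanly is the separability guaranteed by (\ref{eq:property_of_f}), exactly as in the proof of Theorem \ref{thm:forward-moment}. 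I would close by specializing to $\Midx=0$, where $(g_i(e))^0=1$ and $\alpha^{(0)},\beta^{(0)}$ reduce to the forward and backward flows, so that $\Msymbol{0}{i}{\TT,\MsymS}$ equals the constrained flow $\sum_{\PP:c_i=\MsymS}\lambda(\PP)$ and supplies the denominator.
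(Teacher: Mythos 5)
Your proposal is correct and follows essentially the same route as the paper's proof: the head--edge--tail decomposition $\PP=\PP_{H}e\PP_{T}$, separation of the $\lambda$-labels, a double application of the binomial theorem to $\left(f(\PP_{H})+g_{i}(e)+f(\PP_{T})\right)^{\Midx}$, and recombination of the path sums into the forward and backward numerators. Your added remarks on the multinomial identity ${\Midx \choose \Mrunidx}{\Mrunidx \choose k}={\Midx \choose k,\,\Mrunidx-k,\,\Midx-\Mrunidx}$ and the $\Midx=0$ specialization for the denominator are routine elaborations of the same argument, not a different method.
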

\medskip{}

\begin{proof}
Let $\PP_{H}$ and $\PP_{T}$ denote the head and tail parts of the
paths $\PP:A\rightarrow B$ through the trellis $\TT$, with an edge
$e$ in between, i.e., $\PP=\PP_{H}e\PP_{T}$ with $\mathrm{init}(\PP_{H})=A$,
$\mathrm{fin}(\PP_{H})=\mathrm{init}(e)$, $\mathrm{fin}(e)=\mathrm{init}(\PP_{T})$
and $\mathrm{fin}(\PP_{T})=B$, for a given depth $i$ and $e\in\sE_{i-1,i}$.
Then we can write\begin{multline*}
\Msymbol{\Midx}{i}{\TT,\MsymS}={\displaystyle \sum_{{\PP:A\rightarrow B\atop c_{i}=x}}}\left(f(\PP)\right)^{\Midx}\cdot\lambda(\PP)\\
=\hspace{-1mm}\sum_{{e\in\sE_{i-1,i}:\atop \fs(e)=\MsymS}}\sum_{{\PP_{H}:A\rightarrow\atop \mathrm{init}(e)}}\sum_{{\PP_{T}:\mathrm{fin}(e)\atop \rightarrow B}}\hspace{-0mm}\left(f(\PP_{H})\hspace{-.5mm}+\hspace{-.5mm}g_{i}(e)\hspace{-.5mm}+\hspace{-.5mm}f(\PP_{T})\right)^{\Midx}\cdot\\
\cdot\lambda\left(\PP_{H}\, e\,\PP_{T}\right).\end{multline*}
Applying \noun{Bayes}' rule twice and separating the $\lambda$-labels
we obtain \begin{multline*}
\Msymbol{\Midx}{i}{\TT,\MsymS}=\sum_{{e\in\sE_{i-1,i}:\atop \fs(e)=\MsymS}}\lambda(e)\cdot\sum_{\Mrunidx=0}^{\Midx}{\Midx \choose \Mrunidx}\left(f(\PP_{T})\right)^{\Midx-\Mrunidx}\cdot\lambda(\PP_{T})\\
\cdot\sum_{k=0}^{\Mrunidx}{\Mrunidx \choose k}\left(g_{i}(e)\right)^{k}\cdot\left(f(\PP_{H})\right)^{\Mrunidx-k}\cdot\lambda(\PP_{H})\end{multline*}
 and using the definitions of forward and backward numerators finally
yields  the assertion of the theorem.
\end{proof}
\medskip{}

\begin{thm}[Computational Complexity]
\label{thm:SymbolMomentComplexity}Given the forward numerators $\alpha^{(\Midx)}(v)$
and the backward numerators $\beta^{(\Midx)}(v)$ up to order $\Midx$
for all $v\in\sV$, the computation of $\Msymbol{\Midx}{i}{\TT,\MsymS}$
for all $i\in1\dots\Trank$  requires $O(|\sE|)$ arithmetic operations. 
\end{thm}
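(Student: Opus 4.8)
The plan is to count arithmetic operations directly in the closed-form expression (\ref{eq:Theorem-SymbolMoment}), in the same spirit as the complexity analysis of Theorem \ref{thm:MomentCalculationComplexity}. The central claim is that the work attributable to a single edge, over all orders up to the maximal moment $\vmmax$, is bounded by a constant depending only on $\vmmax$, and that when the quantities $\Msymbol{\Midx}{i}{\TT,\MsymS}$ are assembled across all depths $i$ and the constrained symbol value, each edge is visited only a constant number of times. Since $\vmmax$ is fixed, linearity in $|\sE|$ then follows.

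First, for a fixed edge $e\in\sE_{i-1,i}$ with $\fs(e)=\MsymS$, I would evaluate the inner sum over $k$ in (\ref{eq:Theorem-SymbolMoment}), namely the binomially weighted convolution $\sum_{k=0}^{\Mrunidx}\binom{\Mrunidx}{k}\left(g_{i}(e)\right)^{k}\alpha^{(\Mrunidx-k)}(\mathrm{init}(e))$, for every $\Mrunidx\in\{0,\dots,\vmmax\}$. The essential observation is that this quantity depends on $\Mrunidx$ but not on the outer index $\Midx$, so it is an edge-local object computed once per edge rather than once per $(\Midx,e)$ pair. Its evaluation uses the powers $\left(g_{i}(e)\right)^{k}$ (already produced during the forward recursion of Theorem \ref{thm:forward-moment}, or recomputable in $O(\vmmax)$ multiplications) together with $O(\vmmax^{2})$ additional multiplications and additions. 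The outer sum over $\Mrunidx$, weighting these cached terms by $\binom{\Midx}{\Mrunidx}\beta^{(\Midx-\Mrunidx)}(\mathrm{fin}(e))$ and by the scalar $\lambda(e)$, then produces the edge's contribution to $\Msymbol{\Midx}{i}{\TT,\MsymS}$ simultaneously for all $\Midx\le\vmmax$ at a further $O(\vmmax^{2})$ cost. Hence the total cost charged to edge $e$ is $O(\vmmax^{2})$.

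Next I would sum over edges. Because $\mathrm{fin}(e)\in\sV_{i}$ holds for exactly one depth $i$, and because each edge carries a single c-label value $\fs(e)$, the outer double loop over all depths $i\in\{1,\dots,\Trank\}$ and over the constraint $\fs(e)=\MsymS$ assigns every edge to exactly one $(i,\MsymS)$ pair. Consequently, forming $\Msymbol{\Midx}{i}{\TT,\MsymS}$ for all $i$ touches each edge only a constant number of times, and the accumulated cost is $O(\vmmax^{2})\cdot|\sE|=O(|\sE|)$ for fixed $\vmmax$; the final per-pair divisions that turn $\Msymbol{\Midx}{i}{\TT,\MsymS}$ into the normalized symbol moment $\MNsymbol{\Midx}{i}{\TT,\MsymS}$ add only $O(\vmmax)$ operations each and do not affect the order.

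The only point that genuinely needs care — and the step I expect to be the main obstacle — is the amortization in the first step: one must not re-evaluate the nested double sum of (\ref{eq:Theorem-SymbolMoment}) independently for each $\Midx$. A literal evaluation would repeat the inner convolution $\vmmax+1$ times; although this still remains $O(|\sE|)$ for constant $\vmmax$, the clean per-edge bound relies on recognizing the inner sum as an $\Midx$-independent quantity and caching it across the outer orders. I would also be careful to note that, as the hypothesis of the theorem makes explicit, the forward numerators $\alpha^{(\Midx)}(v)$ and backward numerators $\beta^{(\Midx)}(v)$ are assumed given, so their own $O(|\sE|)$ precomputation cost (Theorems \ref{thm:forward-moment} and \ref{thm:BackwardMoment}) is not charged in this count.
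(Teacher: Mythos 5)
Your proof is correct and follows essentially the same route as the paper: a direct operation count on Equation (\ref{eq:Theorem-SymbolMoment}), charging the cost of the nested sums to each edge and observing that every edge lies at exactly one depth (and carries one c-label), so the total over all $i\in1\dots\Trank$ is linear in $|\sE|$. The only difference is one of framing: the paper counts exact multiplications and additions for a single fixed order $\Midx$ (which is all the theorem requires), so your amortization of the $\Midx$-independent inner sum over $k$ --- a nice refinement when all orders up to $\vmmax$ are wanted simultaneously, improving the per-edge cost from $O(\vmmax^{3})$ to $O(\vmmax^{2})$ --- is not needed for the stated claim.
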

\begin{proof}
Consider Equation (\ref{eq:Theorem-SymbolMoment}). The sum over $k$
requires $2\Mrunidx$ multiplications and $\Mrunidx$ additions. The
sum over $\Mrunidx$ requires \[
\sum_{\Mrunidx=0}^{\Midx}(2\Mrunidx+2)-1=\Midx^{2}+3\Midx+1\]
 multiplications and \[
\sum_{\Mrunidx=0}^{\Midx}\Mrunidx+\left(\Midx-1\right)=\frac{1}{2}\left(\Midx^{2}+3\Midx-2\right)\]
 additions. There are at most $|\sE_{i-1,i}|$ edges $e$ for which
$e\in\sE_{i-1,i}$ and $\fs(e)=\MsymS$, thus the sum over these edges
requires at most $|\sE_{i-1,i}|\cdot\left((\Midx^{2}+3\Midx+1)+1\right)$
multiplications and $|\sE_{i-1,i}|\cdot\frac{1}{2}\left(\Midx^{2}+3\Midx-2\right)+|\sE_{i-1,i}|-1$
additions. As we calculate the symbol moments for all $i\in1\dots\Trank$,
we can finally upper limit the requirements by \begin{eqnarray*}
\mathtt{mult} & \leq & |\sE|\cdot(\Midx^{2}+3\Midx+2)\\
\mathtt{add} & \leq & |\sE|\cdot0.5\left(\Midx^{2}+3\Midx\right)-\Trank.\end{eqnarray*}

\end{proof}
\begin{remrk}[Forward/Backward Moments]
For numeric reasons it may be advantageous to directly compute the
\emph{forward} and \emph{backward moments} \[
\bar{\alpha}^{(m)}(v):=\frac{\alpha^{(m)}(v)}{\alpha^{(0)}(v)}\quad\mathrm{and}\quad\bar{\beta}^{(m)}(v):=\frac{\beta^{(m)}(v)}{\beta^{(0)}(v)},\]
 respectively, and to calculate and carry the 0-th numerators (flows)
in the logarithmic domain.

\medskip{}

\end{remrk}
Finally in this section, we describe the calculation of distributions
over all paths $\PP:A\rightarrow B$, or a subset of paths, in the
trellis in analogy to the calculation of moments and symbol moments,
respectively.

\begin{definitn}[Forward/Backward Distribution]
\label{def:ForwardBackwardDistribution}We define the \emph{forward
distribution} $\alpha^{\fD}(v)$ and the \emph{backward distribution}
$\beta^{\fD}(v)$ at a vertex $v$ as the mapping functions \[
\vu\mapsto\sum_{{\PP:A\rightarrow v\atop f(\PP)=\vu}}\lambda(\PP)\qquad\mathrm{and}\qquad\vu\mapsto\sum_{{\PP:v\rightarrow B\atop f(\PP)=\vu}}\lambda(\PP),\]
 respectively.
\end{definitn}
\begin{thm}
\label{thm:Distributions}The forward distribution $\alpha^{\fD}(v)$
at a vertex $v\in\sV_{i}$ can be recursively calculated in the trellis
by\[
\alpha^{\fD}(v)=\sum_{e:\mathrm{fin}(e)=v}\left(\alpha^{\fD}\left(\mathrm{init}(e)\right)\boxplus g_{i}(e)\right)\cdot\lambda(e),\]
where $a(u)\boxplus b$ denotes a shift of the domain of the distribution
$a(u)$ by $b$, and $\alpha^{\fD}(A)$ equals the Dirac function.
The calculation of $\beta^{\fD}(v)$ is analog with $\beta^{\fD}(B)$
being the Dirac function. The distribution $\Dtrellis{\TT}$ and the
\emph{symbol distribution} $\Dsymbol{i}{\MsymS,\TT}$ can be calculated
by\[
\theta^{\fD}(\TT)=\sum_{v\in V_{i}}\alpha^{\fD}\left(v\right)*\beta^{\fD}\left(v\right)\]
and \[
\Omega_{i}^{\fD}(\TT,\MsymS)=\hspace{-3mm}\sum_{{e\in\sE_{i-1,i}:\atop \fs(e)=\MsymS}}\hspace{-3mm}\left(\alpha^{\fD}\left(\mathrm{init}(e)\right)\boxplus g_{i}(e)\right)*\beta^{\fD}\left(\mathrm{fin}(e)\right)\cdot\lambda(e),\]
respectively. Herby, $*$ denotes the convolution operator, i.e. for
two distributions $a(u)$ and $b(u)$ it holds\[
a(u)*b(u)=\int_{-\infty}^{\infty}a(\nu)\cdot b(u-\nu)\, d\nu\,.\]

\end{thm}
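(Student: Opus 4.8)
The three assertions all rest on the additive decomposition (\ref{eq:property_of_f}) of $f$ under concatenation of paths, together with two elementary facts about the operations involved: shifting a quantity by a constant $b$ shifts its distribution by $b$ (this is exactly $\boxplus$), and the distribution of a sum of two independently-contributing path segments is the convolution $*$ of their individual distributions. The plan is to read $\alpha^{\fD}(v)$ as a measure whose mass at $\vu$ is the total $\lambda$-weight of all paths $\PP:A\rightarrow v$ with $f(\PP)=\vu$, and then translate each path decomposition used in Theorems \ref{thm:forward-moment} and \ref{thm:symbol-moment} into a corresponding combination of shifts and convolutions. Since $f$ takes only finitely many values on a finite trellis, these measures are finite sums of weighted \noun{Dirac} masses, so the convolution integral reduces to a finite sum.

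For the forward recursion I would argue by induction on $\mathrm{depth}(v)$, mirroring the proof of Theorem \ref{thm:forward-moment}. Every path $\PP:A\rightarrow v$ with $v\in\sV_{i}$ decomposes uniquely as $\PP=\PP'e$ with $\mathrm{fin}(e)=v$ and $\PP':A\rightarrow\mathrm{init}(e)$. By (\ref{eq:property_of_f}) we have $f(\PP)=f(\PP')+g_{i}(e)$ and $\lambda(\PP)=\lambda(\PP')\cdot\lambda(e)$, so grouping paths by their final edge gives
\[
\alpha^{\fD}(v)(\vu)=\sum_{e:\mathrm{fin}(e)=v}\lambda(e)\sum_{{\PP':A\rightarrow\mathrm{init}(e)\atop f(\PP')=\vu-g_{i}(e)}}\lambda(\PP').
\]
The inner sum is precisely $\alpha^{\fD}(\mathrm{init}(e))$ evaluated at $\vu-g_{i}(e)$, i.e. the shifted distribution $\left(\alpha^{\fD}(\mathrm{init}(e))\boxplus g_{i}(e)\right)(\vu)$, which yields the claimed recursion. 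The base case is immediate: the only path $A\rightarrow A$ is the empty path with $f=0$ and $\lambda=1$, so $\alpha^{\fD}(A)=\delta$. The backward statement follows by the symmetric argument.

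For the distribution $\Dtrellis{\TT}$ I would fix any depth $i$ and use that every path $\PP:A\rightarrow B$ crosses exactly one vertex $v\in\sV_{i}$, splitting as $\PP=\PP_{H}\PP_{T}$ with $\PP_{H}:A\rightarrow v$ and $\PP_{T}:v\rightarrow B$. Again by (\ref{eq:property_of_f}), $f(\PP)=f(\PP_{H})+f(\PP_{T})$ and $\lambda(\PP)=\lambda(\PP_{H})\cdot\lambda(\PP_{T})$, so the total $\lambda$-weight of paths through $v$ with $f(\PP)=\vu$ is exactly the convolution $\left(\alpha^{\fD}(v)*\beta^{\fD}(v)\right)(\vu)$; summing over $v\in\sV_{i}$ gives the stated formula. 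For the symbol distribution $\Dsymbol{i}{\MsymS,\TT}$ I would instead split at the depth-$i$ edge, writing $\PP=\PP_{H}\,e\,\PP_{T}$ with $e\in\sE_{i-1,i}$ and $\fs(e)=\MsymS$, exactly as in Theorem \ref{thm:symbol-moment}. Now $f(\PP)=f(\PP_{H})+g_{i}(e)+f(\PP_{T})$; I absorb the constant $g_{i}(e)$ into the head distribution as a shift, convolve with the tail distribution to account for adding $f(\PP_{T})$, weight by $\lambda(e)$, and sum over the admissible edges.

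The main obstacle is not analytic but one of bookkeeping: one must verify that the formal convolution integral $\int_{-\infty}^{\infty}a(\nu)\,b(\vu-\nu)\,d\nu$ genuinely counts weighted path-pairs when $\alpha^{\fD}$ and $\beta^{\fD}$ are atomic measures, and that the sign convention in $\boxplus$ --- a shift of the domain by $b$, meaning $\left(a\boxplus b\right)(\vu)=a(\vu-b)$ --- is consistent with adding $g_{i}(e)$ to $f$. One also relies on commutativity and associativity of $*$, together with the fact that the \noun{Dirac} base cases act as convolution identities; these steps are routine but must be aligned carefully so that the head, edge, and tail contributions combine in the correct order.
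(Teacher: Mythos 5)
Your proof is correct and is essentially the paper's own argument spelled out: the paper disposes of this theorem in one sentence (``follows directly from Definition \ref{def:ForwardBackwardDistribution}''), and your write-up supplies exactly the routine details that sentence presupposes --- the final-edge decomposition for the recursion (mirroring Theorem \ref{thm:forward-moment}), the split at a depth-$i$ vertex for $\theta^{\fD}(\TT)$, and the split $\PP=\PP_{H}\, e\,\PP_{T}$ at a depth-$i$ edge for $\Omega_{i}^{\fD}(\TT,\MsymS)$ (mirroring Theorem \ref{thm:symbol-moment}), with the shift/convolution semantics of $\boxplus$ and $*$ made explicit for atomic measures. No gap; your version is simply the careful elaboration of the paper's one-line proof.
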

\begin{proof}
Theorem \ref{thm:Distributions} follows directly from Definition
\ref{def:ForwardBackwardDistribution}. 
\end{proof}
\medskip{}

\begin{remrk}[Density Distributions]
When normalizing distributions by the corresponding flow, we obtain
\emph{density distributions}.
\end{remrk}

\begin{remrk}[Probability Density Functions]
For $\lambda(e)$ being probabilities, normalized distributions are
probability density functions with the mapping $f(\PP)\rightarrow P\left(f(\PP)\right)$
 and $\sum_{f(\PP)}P\left(f(\PP)\right)=1$. 
\end{remrk}

\begin{remrk}
By Theorem \ref{thm:Distributions}, the complexity due to the calculation
on the trellis is in general not reduced (except for the hard decision
case) as infinite resolution of the domain of $\alpha^{\fD}(v)$ etc.
is required.. However, in Appendix \ref{sub:Calculate-Actual-Distribution}
an algorithm is introduced which approximates Theorem \ref{thm:Distributions}
and does reduce complexity.
\end{remrk}

\begin{remrk}
We cannot only determine the distribution and its moments of a trellis
or sub-trellis, but also of a single edge.
\end{remrk}

\begin{remrk}
The symbol distribution for two sub-trellises of the $[7\;5]_{\mathrm{oct}}$
convolutional code, namely the sub-codes with the i-th code bit $c_{i}=+1$
and $c_{i}=-1$, respectively, is given in Example \ref{exa:SymbolDistribution}.
The curves obtained by Gaussian approximation almost coincide with
the ones plotted in Figure \ref{fig:Symbol-distribution}.
\end{remrk}

\begin{remrk}
It is straight forward to extend the proposed algorithm to the calculation
of joint moments of two or more functions. E.g., \[
\bar{\theta}_{y,z}^{(k,m)}:=\frac{\sum_{\PP}\left(f_{y}(\PP)\right)^{k}\left(f_{z}(\PP)\right)^{m}\cdot\lambda(\PP)}{\sum_{\PP}\lambda(\PP)}\]
 can be calculated using  \begin{eqnarray*}
\alpha_{y,z}^{(k,m)}(v) & := & \sum_{\PP:A\rightarrow v}\left(f_{y}(\PP)\right)^{k}\cdot\left(f_{z}(\PP)\right)^{m}\cdot\lambda(\PP)\\
 & = & {\displaystyle \sum_{e:\mathrm{fin}(e)=v}}\lambda(e)\cdot\sum_{j=0}^{k}\sum_{l=0}^{m}{k \choose j}{m \choose l}\\
 &  & \qquad g_{i,y}^{k-j}(e)\cdot g_{i,z}^{m-l}(e)\cdot\alpha_{y,z}^{(j,l)}\left(\mathrm{init}(e)\right)\end{eqnarray*}
 with $i=\mathrm{depth}(v)$.
\end{remrk}

\section{\label{sec:Applications}Applications}

We will now apply the results of Section \ref{sec:Trellis-Computations}
to linear block codes. We compute the moments \[
\fE_{\mathbf{C}}\left[\left(H(\mathbf{c}|\mathbf{w})\right)^{\Midx}|\mathbf{r},c_{i}=x\right]:=\sum_{\mathbf{c}\in\sC}\left(H(\mathbf{c}|\mathbf{w})\right)^{\Midx}P(\mathbf{c}|\mathbf{r},c_{i}=x)\]
 of the distribution  \[
\mathcal{D}:\vu=H(\mathbf{c}|\mathbf{w})\mapsto P\left(\vu|\mathbf{r},c_{i}=x\right)=\hspace{-2mm}\sum_{{\mathbf{c}\in\sC:\atop H(\mathbf{c}|\mathbf{w})=\vu}}\hspace{-2mm}P\left(\mathbf{c}|\mathbf{r},c_{i}=x\right)\]
over all code words $\mathbf{c}\in\sC$ given a received word $\mathbf{r}$
and the \emph{i}-th code bit being $c_{i}=x\in\{-1,1\}$, where \[
H(\mathbf{c}|\mathbf{w})=-\log P(\mathbf{c}|\mathbf{w})\]
 is the conditional uncertainty of $\mathbf{c}$ given a word $\mathbf{w}$
and $P(\mathbf{c}|\mathbf{r})$ is the conditional probability of
$\mathbf{c}$ given $\mathbf{r}$. These moments are required, e.g.,
for the discriminated belief propagation algorithm in \cite{Sorger2007}.
As a special case we can calculate the conditional mean uncertainty
or \emph{entropy} \[
H(\mathbf{C}|\mathbf{r})=\sum_{\mathbf{c}\in\sC}H(\mathbf{c}|\mathbf{r})\cdot P(\mathbf{c}|\mathbf{r})\]
 of a code or sub-code given $\mathbf{r}$.

Both for hard decision (BSC) and soft decision (AWGN channel) the
conditional uncertainty is linearly related to the correlation $\mathbf{cw^{T}}$
(cf. Appendix \ref{sub:Relation-uncertainty-correlation}),\begin{equation}
H(\mathbf{c}|\mathbf{w})=\KI+\KII\cdot\mathbf{c}\mathbf{w}^{T}\;,\label{eq:Relation-Uncertainty-Correlation}\end{equation}
with $\KI$ and $\KII$ being constant functions of error probability
and vector $\mathbf{w}$ (assuming equiprobable code words). Therefore,
when applying the binomial theorem,  \begin{eqnarray*}
{\displaystyle {\displaystyle \sum_{\mathbf{c}\in\sC}}}\left(H(\mathbf{c}|\mathbf{w})\right)^{\Midx}\cdot P(\mathbf{c}|\mathbf{r},c_{i}=x)\\
 & \hspace{-80mm}= & \hspace{-40mm}{\displaystyle {\displaystyle \sum_{\mathbf{c}\in\sC}}}\left(\KI+\KII\cdot\mathbf{c}\mathbf{w}^{T}\right)^{\Midx}\cdot P(\mathbf{c}|\mathbf{r},c_{i}=x)\\
 & \hspace{-80mm}= & \hspace{-40mm}{\displaystyle {\displaystyle \sum_{\mathbf{c}\in\sC}}}\sum_{\Mrunidx=0}^{\Midx}{\Midx \choose \Mrunidx}\KI^{\Midx-\Mrunidx}\KII^{\Mrunidx}\left(\mathbf{c}\mathbf{w}^{T}\right)^{\Mrunidx}P(\mathbf{c}|\mathbf{r},c_{i}=x)\\
 & \hspace{-80mm}= & \hspace{-40mm}\sum_{\Mrunidx=0}^{\Midx}{\Midx \choose \Mrunidx}\KI^{\Midx-\Mrunidx}\KII^{\Mrunidx}\cdot\fE_{\mathbf{C}}\left[\left(\mathbf{c}\mathbf{w}^{T}\right)^{\Mrunidx}|\mathbf{r},c_{i}=x\right],\end{eqnarray*}
 it is sufficient to calculate the moments \begin{equation}
\fE_{\mathbf{C}}\left[\left(\mathbf{c}\mathbf{w}^{T}\right)^{\Midx}|\mathbf{r},c_{i}=x\right]={\displaystyle {\displaystyle \sum_{\mathbf{c}\in\sC}}}\left(\mathbf{c}\mathbf{w}^{T}\right)^{\Midx}\cdot P(\mathbf{c}|\mathbf{r},c_{i}=x)\label{eq:Correlation-Moments}\end{equation}
 of the correlation $\mathbf{cw}^{T}$ on the trellis which will be
done in the following.

Consider a binary linear block code $\sC$ of length $\Trank$ which
is representable in a trellis, e.g., a terminated convolutional code.
 Let the c-labels $\fs(e)=c_{i}\in\{\pm1\}$ be the bipolar representation
of the code bit labeling edge $e$. To each path $\PP:A\rightarrow B$
it belongs a sequence $\mathbf{c}(\PP)$ of $\Trank$ c-labels representing
a code word $\mathbf{c}\in\sC$. Let $\mathbf{r}=[r_{1}r_{2}\cdots r_{\Trank}]$,
$r_{i}\in\mathbb{R}$, be the noisy version of a code word $\mathbf{c}$
after transmission over a memory-less channel. Let the $\lambda$-label
of a path $\PP$ be the conditional probability of the received word
$\mathbf{r}$ given the code word $\mathbf{c}$, i.e., $\lambda(\PP)=P(\mathbf{r}|\mathbf{c})$.
Let further the function $f$ of the paths' c-labels, i.e., the function
of the code words, be the correlation (inner product) of $\mathbf{w}$
and $\mathbf{c}$, \[
f(\PP)=f(\mathbf{c}(\PP))=\mathbf{c}\mathbf{w}^{T}=\sum_{i=1}^{n}c_{i}w_{i}\,.\]
Hence, $g_{i}(e)=c_{i}w_{i}$ and the separability criterion (\ref{eq:property_of_f})
is fulfilled. In the trellis of $\sC$, for each vertex $v\in\sV$
the c-labels $\fs(e)$ of edges $\{e:\mathrm{init}(e)=v\}$ emerging
from $v$ are distinct. Therefore there is a one-to-one mapping of
each code word $\mathbf{c}$ to a path $\PP$ in the trellis, and
we can apply the theorems of Section \ref{sec:Trellis-Computations}
replacing $\sum_{\PP}$ by $\sum_{\mathbf{c}}$. Applying \noun{Bayes}'
rule to (\ref{eq:Correlation-Moments}), \begin{equation}
\fE_{\mathbf{C}}\left[\left(\mathbf{c}\mathbf{w}^{T}\right)^{\Midx}|\mathbf{r},c_{i}=x\right]=\frac{{\displaystyle {\displaystyle \sum_{\mathbf{c}\in\sC:c_{i}=x}}}\left(\mathbf{c}\mathbf{w}^{T}\right)^{\Midx}P(\mathbf{r}|\mathbf{c})}{{\displaystyle {\displaystyle \sum_{\mathbf{c}\in\sC:c_{i}=x}}}P(\mathbf{r}|\mathbf{c})},\label{eq:Correlation-Moments-02}\end{equation}
 and comparing with Definition \ref{def:SymbolMoment} we observe
that Theorems \ref{thm:forward-moment} and \ref{thm:BackwardMoment}
hold, and hence these moments can be calculated in the trellis according
to Theorem \ref{thm:symbol-moment} as the symbol moments \[
\fE_{\mathbf{C}}\left[\left(\mathbf{c}\mathbf{w}^{T}\right)^{\Midx}|\mathbf{r},c_{i}=x\right]=\MNsymbol{\Midx}{i}{\MsymS}.\]

Analogously, when omitting the code bit constraint $c_{i}=x$, the
moments are given by \[
\fE_{\mathbf{C}}\left[\left(\mathbf{c}\mathbf{w}^{T}\right)^{\Midx}|\mathbf{r}\right]=\sum_{\mathbf{c}\in\sC}\left(\mathbf{c}\mathbf{w}^{T}\right)^{\Midx}\cdot P(\mathbf{c}|\mathbf{r})=\MNtrellis{\Midx}{\TT}.\]

For $\mathbf{w}=\mathbf{r}$, $m=1$ and $g_{i}(e)=c_{i}r_{i}$ we
can thus calculate the conditional entropies \[
H(\sC|\mathbf{r})=\sum_{\mathbf{c}\in\sC}H(\mathbf{c}|\mathbf{r})\cdot P(\mathbf{c}|\mathbf{r})=\KI+\KII\cdot\MNtrellis{1}{\TT}\]
 and \[
H(\sC_{i}(x)|\mathbf{r})=\hspace{-2mm}\sum_{\mathbf{c}\in\sC:c_{i}=x}\hspace{-3mm}H(\mathbf{c}|\mathbf{r})\cdot P(\mathbf{c}|\mathbf{r})=\KI+\KII\cdot\MNsymbol{1}{i}{\MsymS}\]
 of the code $\sC$ and the sub-code $\sC_{i}(x)=\{\mathbf{c}\in\sC:c_{i}=x\}$
given $\mathbf{r}$, respectively. While $H(\sC|\mathbf{r})$ can
also be calculated with the classical BCJR algorithm as \begin{eqnarray*}
\sum_{\mathbf{c}\in\sC}\mathbf{c}\mathbf{r}^{T}\cdot P(\mathbf{c}|\mathbf{r}) & \hspace{-2mm}= & \hspace{-2mm}\sum_{i=1}^{\Trank}\sum_{\mathbf{c}\in\sC}c_{i}r_{i}\cdot P(\mathbf{c}|\mathbf{r})\\
 & \hspace{-2mm}= & \hspace{-2mm}\sum_{i=1}^{\Trank}r_{i}\cdot\left(\sum_{{\mathbf{c}\in\sC:\atop c_{i}=1}}P(\mathbf{c}|\mathbf{r})-\hspace{-2mm}\sum_{{\mathbf{c}\in\sC:\atop c_{i}=-1}}P(\mathbf{c}|\mathbf{r})\right),\end{eqnarray*}
this does not hold for the conditional entropy of $\sC_{i}(x)$.

\begin{remrk}
For a convolutional code with $c$ outputs, to each edge in the trellis
are assigned $c$ code symbols. To apply our definition of a single
symbol label per edge, each edge $e$ of the original trellis is replaced
by a path $e'_{1}e'_{2}\cdots e'_{c}$ of $c$ edges which fulfill
\[
\mathrm{init}(e)=\mathrm{init}(e_{1}'),\mathrm{fin}(e_{1}')=\mathrm{init}(e_{2}'),\dots,\mathrm{fin}(e_{c}')=\mathrm{fin}(e)\]
 and to each edge $e_{i}'$ one code symbol is assigned.
\end{remrk}
\begin{example}
Figure \ref{fig:Symbol-distribution} shows the distribution of $P\left(\mathbf{c}\mathbf{r}^{T},c_{i}=\pm1|\mathbf{r}\right)$
over $\mathbf{cr}^{T}$ for the $[5\,7]_{oct}$ convolutional code
of length $\Trank=200$ given a noisy received word $\mathbf{r}$
after transmission over a BSC with bit error probability $p=0.35$.
These are the normalized symbol distributions $\DsymbolN{i=10}{\pm1}$
weighted by the probability $P(c_{i}=\pm1|\mathbf{r})$.
\end{example}

\begin{example}
Figure \ref{fig:Trellis-distribution}%
\begin{figure}
\begin{centering}
\includegraphics{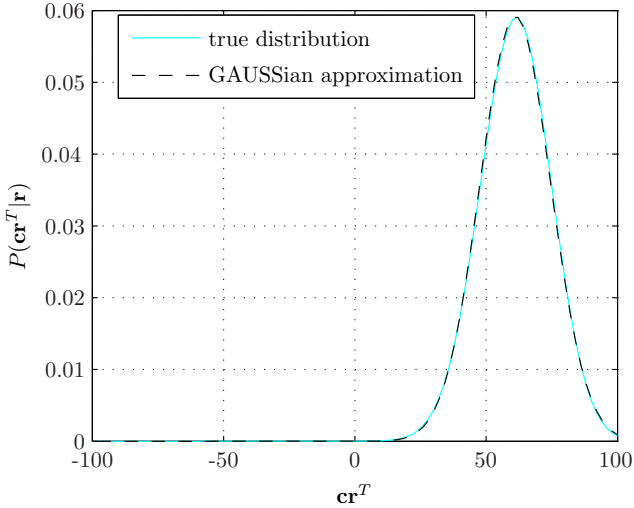}
\par\end{centering}

\caption{\label{fig:Trellis-distribution}Distribution and its \noun{Gauss}ian
Approximation}

\end{figure}
 shows a distribution of the terminated $[7\;5]_{\mathrm{oct}}$ convolutional
code  as well the \noun{Gauss}ian approximation given the first two
moments for a BSC.
\end{example}

\section{Conclusions}

A trellis represents a general distribution which can be marginalized,
e.g. with respect to edge labels. Two algorithms for computations
on the trellis were presented: One allowing to calculate distributions,
the other to compute their moments, allowing to approximate the distributions.
The latter was derived by generalizing the forward/backward recursion
as known from the BCJR algorithm. The results were transferred to
the concrete problem of computing the moments of the conditional distribution
of the correlation between a block code and some given word. The moment
calculation algorithm is a requirement for efficient implementation
of the discriminated belief propagation algorithm in \cite{Sorger2007}.
It can also be used to calculate the conditional entropy of a code
or sub-code. Though not the focus of this paper, in the Appendix it
is shown that the algorithm does not restrict to calculation with
real numbers, but is valid for any commutative semi-ring, thus providing
a generalization of the Viterbi algorithm. The asymptotic complexity
of the moment computation algorithm is the same as for the BCJR algorithm.

\appendix

\subsection{\label{sub:Relation-uncertainty-correlation}Relation between Uncertainty
and Correlation}

The conditional uncertainty of a code word $\mathbf{c}$ given a word
$\mathbf{w}$ is defined as \[
H(\mathbf{c}|\mathbf{w}):=-\log_{2}P(\mathbf{c}|\mathbf{w})=-\log_{2}P(\mathbf{w}|\mathbf{c})+\underbrace{\log_{2}\frac{P(\mathbf{w})}{P(\mathbf{c})}}_{\KIa}\]
where $\KIa$ is a constant assuming equiprobable code words. Assuming
further that $w_{\Mdeptha}$ is independent of $c_{\Mdepthb}$ for
$\Mdeptha\neq\Mdepthb$ it follows that\begin{eqnarray*}
\log_{2}P(\mathbf{w}|\mathbf{c}) & = & \log_{2}\prod_{\Mdeptha=1}^{\Trank}P(w_{\Mdeptha}|c_{\Mdeptha})\\
 & = & \sum_{\Mdeptha=1}^{\Trank}\log_{2}P(w_{\Mdeptha}|c_{\Mdeptha})\;.\end{eqnarray*}

\begin{itemize}
\item For a binary symmetric channel (BSC) with $w_{\Mdeptha},c_{\Mdeptha}\in\{\pm1\}$
and error probability $p$ the Hamming distance between $\mathbf{c}$
and $\mathbf{w}$ is $\frac{n-\mathbf{cw}^{T}}{2}$ which gives \begin{eqnarray*}
\log_{2}P(\mathbf{w}|\mathbf{c}) & \hspace{-2.5mm}= & \hspace{-2.5mm}\frac{n-\mathbf{cw}^{T}}{2}\log_{2}p+\frac{n+\mathbf{cw}^{T}}{2}\log_{2}(1-p)\\
 & \hspace{-2.5mm}= & \hspace{-2.5mm}\underbrace{\frac{n}{2}\log_{2}\left(p(1-p)\right)}_{\KIb}+\mathbf{cw}^{T}\cdot\underbrace{\frac{1}{2}\log_{2}\frac{1-p}{p}}_{\KII}.\end{eqnarray*}

\item For an AWGN channel with noise variance $\sigma^{2}$ we obtain (note
that $P(\mathbf{w}|\mathbf{c})$ actually is the \noun{Gauss} probability
density) \begin{eqnarray*}
\log_{2}P(\mathbf{w}|\mathbf{c}) & \hspace{-2mm}= & \hspace{-2mm}\sum_{\Mdeptha=1}^{\Trank}\log_{2}\frac{1}{\sqrt{2\pi}\sigma}\exp\left(-\frac{(w_{\Mdeptha}-c_{\Mdeptha})^{2}}{2\sigma^{2}}\right)\\
\\ & \hspace{-30mm}= & \hspace{-15mm}\sum_{\Mdeptha=1}^{\Trank}\left(\log_{2}\frac{1}{\sqrt{2\pi}\sigma}-\frac{(w_{\Mdeptha}-c_{\Mdeptha})^{2}}{2\sigma^{2}\cdot\ln2}\right)\\
 & \hspace{-30mm}= & \hspace{-15mm}\Trank\log_{2}\frac{1}{\sqrt{2\pi}\sigma}-\frac{1}{2\sigma^{2}\cdot\ln2}\sum_{\Mdeptha=1}^{\Trank}\left(c_{\Mdeptha}^{2}+w_{\Mdeptha}^{2}-2c_{\Mdeptha}w_{\Mdeptha}\right)\\
 & \hspace{-30mm}= & \hspace{-15mm}\underbrace{\Trank\log_{2}\frac{1}{\sqrt{2\pi}\sigma}-\frac{n+\mathbf{w}\mathbf{w}^{T}}{2\sigma^{2}\cdot\ln2}}_{\KIb}+\mathbf{cw}^{T}\cdot\underbrace{\frac{1}{\sigma^{2}\cdot\ln2}}_{\KII}\,.\end{eqnarray*}

\end{itemize}
In either case we can thus express the conditional uncertainty as
\[
H(\mathbf{c}|\mathbf{w})=\underbrace{\left(\KIa-\KIb\right)}_{\KI}-\KII\cdot\mathbf{cw}^{T}\:,\]
I.e., the uncertainty is linearly related to the correlation.

\subsection{Calculating the Actual Distribution\label{sub:Calculate-Actual-Distribution}}

For a trellis of rank $\Trank$ and $g_{i}(e)\in\{\pm1\}$, which
is the case for hard decision decoding, the domain of the distributions,
i.e., the values that $f(\PP)$ can take, is $\sD=\{-\Trank,-\Trank+2,\dots,\Trank-2,\Trank\}$
with cardinality $\left|\sD\right|=\Trank+1$. In this case the distributions
can be directly implemented as vectors of length $\Trank+1$. A shift
$\boxplus$ of the domain is simply a shift of the vector contents,
and the correlation operation $*$ is discrete.

In case of soft decision, the domain needs to be quantized. For \noun{Gauss}ian
distributions, an efficient way for uniform mid-tread quantization
 is to carry along the mean value $\mu$ of the distribution and
to arrange the partitions equally to both sides of it, storing the
partition contents in vectors $\Dd$. When extending a path $\PP$
by an edge $e\in\sE_{i-1,i}$ in the forward/backward recursion (\emph{lengthening}),
the domain of $f(\PP)$ is shifted by $g_{i}(e)$, i.e., $g_{i}(e)$
is added to the $\mu$. However, when \emph{joining} paths in a vertex,
the mean values of the incoming path distributions do usually not
coincide. Hence a new mean value $\Dmunew$ has to be determined and
the partition contents need to be distributed. 

Let the vectors $\Dd$ be of length $(2\Nslots+1)$, each element
corresponding to a partition of width $\Swidth\,$. The partitions
are indexed by $\Didx\in\{-N,-N+1,\dots,N\}$, where $\Didx=0$ denotes
the center partition around the mean value. The a mean value $\Dmunew$
is the weighted sum of the mean values $\Dmuin$ of the involved distributions
in vectors $\Dd_{\mathrm{in}}$. E.g., for the forward recursion,
 \begin{multline*}
\Dmunew=\alpha^{\mu}(v):=\sum_{e:\mathrm{fin}(e)=v}\underbrace{\left(\alpha^{\mu}\left(\mathrm{init}(e)\right)+g_{i}(e)\right)}_{\Dmuin}\\
\cdot\underbrace{\frac{\lambda(e)\cdot\vFlow{A}{\mathrm{init}(e)}}{\sum_{e':\mathrm{fin}(e')=v}\lambda(e')\cdot\vFlow{A}{\mathrm{init}(e')}}}_{\mathrm{relative\, weight\, of\, edge\,}e},\end{multline*}
 with $\alpha^{\fD}(A)=0$, where $\Dmuin$ is the mean value of the
distribution $\alpha^{\fD}\left(\mathrm{init}(e)\right)$ after lengthening
by $g_{i}(e)$, and $\alpha^{\mu}(v)$ is the mean of the forward
distribution $\alpha^{\fD}(v)$. The final distribution vector $\Dd_{\mathrm{new}}=\alpha^{\mathbf{d}}(v)$
is the weighted sum of the vectors $\Dd_{\mathrm{out}}$ which are
calculated by distributing the content of the vectors $\Dd_{\mathrm{in}}=\alpha^{\mathbf{d}}\left(\mathrm{init}(e)\right)$
according to the new partition margins with $\DmuDelta=\mu_{\mathrm{new}}-\mu_{\mathrm{in}}$
as follows (cf. Figure \ref{fig:Distribution-Assignment}).%
\begin{figure*}
\begin{centering}
\includegraphics[scale=0.4]{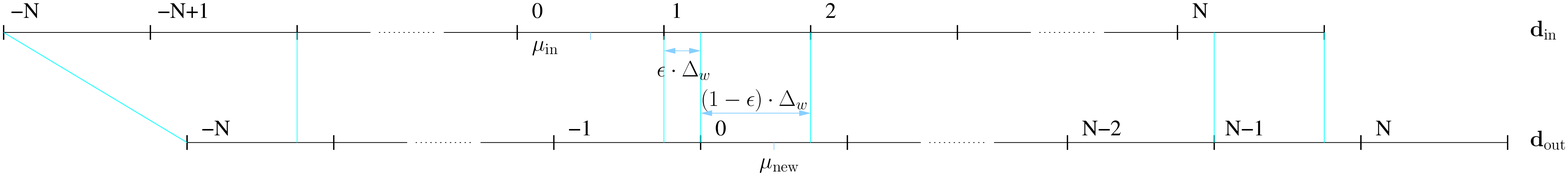} 
\par\end{centering}

\caption{\label{fig:Distribution-Assignment}Assignment of Partition Contents
of a Quantized Distribution}

\end{figure*}

\begin{itemize}
\item $\Dfraclow=\frac{\DmuDelta}{\Swidth}-\left\lfloor \frac{\DmuDelta}{\Swidth}\right\rfloor \in[0,1)$
\item $\Dd_{\mathrm{out}}=\mathbf{0}$ (all-zero vector)
\item $\Del{\mathrm{out}}{-\Nslots}={\displaystyle \sum_{\Didx=-\Nslots}^{-\Nslots+\left\lfloor \frac{\DmuDelta}{\Swidth}\right\rfloor }}\Del{\mathrm{in}}{\Didx}$
\item $\Del{\mathrm{out}}{\Nslots}=\sum_{\Didx=\Nslots+\left\lfloor \frac{\DmuDelta}{\Swidth}\right\rfloor +1}^{\Nslots}\Del{\mathrm{in}}{\Didx}$
\item for {\footnotesize $\max\left(-\Nslots,-\Nslots+\left\lfloor \frac{\DmuDelta}{\Swidth}\right\rfloor +1\right)\leq\Didx\leq\min\left(\Nslots,\Nslots+\left\lfloor \frac{\DmuDelta}{\Swidth}\right\rfloor \right)$}

\begin{itemize}
\item $\Del{\mathrm{out}}{\Didx-\left\lfloor \frac{\DmuDelta}{\Swidth}\right\rfloor -1}\pe\Dfraclow\cdot\Del{\mathrm{in}}{\Didx}$\vspace{0.1mm}

\item $\Del{\mathrm{out}}{\Didx-\left\lfloor \frac{\DmuDelta}{\Swidth}\right\rfloor }\pe(1-\Dfraclow)\cdot\Del{\mathrm{in}}{\Didx}$
\end{itemize}
\end{itemize}
where $a\pe b$ denotes the addition of $b$ to $a$, i.e., $a=a+b$.
The forward distribution vector $\alpha^{\mathbf{d}}(A)$ is initialized
The backward distribution is computed analogously.

With the two procedures of lengthening and joining the mean value
of the symbol distribution can be calculated by \begin{multline*}
\Omega_{i}^{\mu}(\MsymS,\TT)=\sum_{{e\in E_{i-1,i}:\atop \fs(e)=\MsymS}}\underbrace{\left(\alpha^{\mu}\left(\mathrm{init}(e)\right)+g_{i}(e)+\beta^{\mu}\left(\mathrm{fin}(e)\right)\right)}_{\mu_{\mathrm{in}}}\\
\cdot\underbrace{\frac{\vFlow{A}{\mathrm{init}(e)}\cdot\lambda(e)\cdot\vFlow{\mathrm{fin}(e)}{B}}{\sum_{e'\in E_{i-1,i}:c(e')=\MsymS}\vFlow{A}{\mathrm{init}(e')}\cdot\lambda(e')\cdot\vFlow{\mathrm{fin}(e')}{B}}}_{\mathrm{relative\, weight\, of\, edge\,}e},\end{multline*}
 and the discrete symbol distribution vector $\mathbf{d}_{\mathrm{out}}=\Omega_{i}^{\mathbf{d}}(\MsymS,\TT)$
is obtained by convolving the forward and backward distribution vectors
$\alpha^{\mathbf{d}}\left(\mathrm{init}(e)\right)$ and $\beta^{\mathbf{d}}\left(\mathrm{fin}(e)\right)$
for each edge $e\in\sE_{i-1,i}:c(e)=\MsymS$,\[
\mathbf{d}_{\mathrm{in}}=\alpha^{\mathbf{d}}\left(\mathrm{init}(e)\right)*\beta^{\mathbf{d}}\left(\mathrm{fin}(e)\right),\]
 followed by a weighted re-distribution of the vector contents of
the $\mathbf{d}_{\mathrm{in}}$ to $\mathbf{d}_{\mathrm{out}}$.

\subsection{\label{sub:Appendix-Semi-Ring}Generalization to Calculations on
a Semi-ring}

In the main part of this paper, the computation of moments in the
trellis is introduced for real numbers. However, the algorithm is
valid for the more general algebraic structure of commutative semi-rings.
The $0$-th forward moment then results in the Viterbi algorithm on
semi-rings.

Let the $\lambda$-label and the $c$-label come from an algebraic
set $\sS$ which is closed under the two binary operations $\oPlus$
and $\oDot$, called addition and multiplication, which satisfy the
following axioms:

\begin{itemize}
\item The operation $\oDot$ is associative and commutative, and there is
an identity element $\vSIdent$ such that $s\oDot\vSIdent=\vSIdent\oDot s=s$
for all $s\in\sS$, making $(\sS,\oDot)$ a \emph{commutative monoid}.
\item The operation $\oPlus$ is associative and commutative, and there
is an identity element $\vSNull$ such that $s\oPlus\vSNull=\vSNull\oPlus s=s$
for all $s\in\sS$, making $(\sS,\oPlus)$ a \emph{commutative monoid}.
\item The distributive law $(x\oPlus y)\oDot z=(x\oDot z)\oPlus(y\oDot z)$,
for all triples $(x,y,z)$ from $\sS$.
\item The identity element $\vSNull$ of the addition annihilates $\sS$,
i.e., $\vSNull\oDot s=s\oDot\vSNull=\vSNull$ for all $s\in\sS$.
\end{itemize}
The triple $(\sS,\oDot,\oPlus)$ is called a \emph{commutative semiring}.

Let $a,b\in(\sS,\oDot,\oPlus)$ be elements of such a commutative
semiring. We define the following notation:\begin{eqnarray*}
a^{m} & := & \begin{cases}
\underbrace{a\oDot a\oDot\dots\oDot a}_{m} & m\in\mathbb{N}\\
\vSIdent & m=0\end{cases}\\
na & := & \begin{cases}
\underbrace{a\oPlus a\oPlus\dots\oPlus a}_{n}={\displaystyle \oSum_{i=1}^{n}}\, a & n\in\mathbb{N}\\
\vSNull & n=0\end{cases}\end{eqnarray*}
with $n\, a\oDot b=n\,(a\oDot b)$ and $\mathbb{N}$ being the set
of natural numbers. Then the binomial theorem can be written as \[
(a\oPlus b)^{m}=\oSum_{\Mrunidx=0}^{\Midx}{\Midx \choose \Mrunidx}a^{\Mrunidx}\oDot b^{\Midx-\Mrunidx},\;\Midx,\Mrunidx\in\mathbb{N}_{0},\; a,b\in(\sS,\oDot,\oPlus)\]
 with the binomial coefficient ${\Midx \choose \Mrunidx}\in\mathbb{N}_{0}=\left\{ \mathbb{N}\cup0\right\} $.
In analogy to Definition \ref{def:forward_trellis_moment} and Theorem
\ref{thm:forward-moment} we can now define the forward numerator
and its calculation on a semi-ring.

\begin{definitn}
We define the $m$-th forward numerator of a function $f\in(\sS,\oDot,\oPlus)$
at vertex $v$ of a trellis $T$ as\begin{equation}
\alpha^{(\Midx)}(v):=\oSum_{\PP:A\rightarrow v}\lambda(\PP)\oDot\left(f(\PP)\right)^{\Midx}\label{eq:alpha_definition_semiring}\end{equation}
with initial values\[
\alpha^{(\Midx)}(A):=\left\{ \begin{array}{ccc}
\vSIdent & : & \Midx=0\\
\vSNull & : & \Midx>0\end{array}\right..\]

\end{definitn}
\begin{thm}
The $m$-th \emph{forward moment} $\alpha^{(\Midx)}(v)$ of a vertex
$v\in V_{i}$ on depth $i$ can be recursively calculated on a trellis
$T$ and a commutative semiring $(\sS,\oDot,\oPlus)$ by\begin{equation}
\alpha^{(\Midx)}(v)=\hspace{-3mm}\oSum_{e:\mathrm{fin}(e)=v}\hspace{-3mm}\lambda(e)\oDot\oSum_{\Mrunidx=0}^{\Midx}{\Midx \choose \Mrunidx}\left(g_{i}(e)\right)^{\Mrunidx}\oDot\alpha^{(\Midx-\Mrunidx)}\left(\mathrm{init}(e)\right)\label{eq:alpha_recursive_formula_semiring}\end{equation}
 for all functions $f(\PP:A\rightarrow v)$ and $g_{j}$, $j=1,\dots,i$,
which fulfill\begin{equation}
f(\PP)=f(e_{1}e_{2}\cdots e_{i})=g_{1}(e_{1})\oPlus g_{2}(e_{2})\oPlus\dots\oPlus g_{i}(e_{i}).\label{eq:property_of_f_semiring}\end{equation}

\end{thm}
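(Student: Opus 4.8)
The plan is to mirror the induction-on-depth argument used for Theorem~\ref{thm:forward-moment}, replacing ordinary addition and multiplication throughout by the semiring operations $\oPlus$ and $\oDot$, and checking at each step that only the semiring axioms (associativity, commutativity, distributivity, the identities $\vSIdent$ and $\vSNull$, and the annihilation law) are invoked. First I would dispatch the base case $\mathrm{depth}(v)=1$: every path $A\rightarrow v$ is a single edge $e$, so the definition (\ref{eq:alpha_definition_semiring}) gives $\alpha^{(\Midx)}(v)=\oSum_{e:A\rightarrow v}\lambda(e)\oDot(g_{1}(e))^{\Midx}$. The recursion (\ref{eq:alpha_recursive_formula_semiring}) applied at depth~$1$ collapses to exactly this, because the initialization $\alpha^{(0)}(A)=\vSIdent$, $\alpha^{(\Midx)}(A)=\vSNull$ for $\Midx>0$, together with the annihilation axiom $\vSNull\oDot s=\vSNull$ and the fact that $\vSNull$ is the $\oPlus$-identity, kills every inner term except the one with $\Mrunidx=\Midx$, leaving $\lambda(e)\oDot(g_{1}(e))^{\Midx}\oDot\vSIdent$.

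For the inductive step I would assume the claim for all vertices of depth at most $i$ and all orders up to $\Midx$, and substitute the induction hypothesis $\alpha^{(\Midx-\Mrunidx)}(\mathrm{init}(e))=\oSum_{\PP:A\rightarrow\mathrm{init}(e)}\lambda(\PP)\oDot(f(\PP))^{\Midx-\Mrunidx}$ into the right-hand side of (\ref{eq:alpha_recursive_formula_semiring}) for a vertex $v$ one depth deeper. Using distributivity I would pull the path-summation through, and using commutativity of $\oDot$ I would regroup the factors as $\lambda(e)\oDot\lambda(\PP)$ and $(g_{i}(e))^{\Mrunidx}\oDot(f(\PP))^{\Midx-\Mrunidx}$. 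The crucial algebraic step is the semiring binomial theorem stated just above the definition, which collapses $\oSum_{\Mrunidx=0}^{\Midx}{\Midx \choose \Mrunidx}(g_{i}(e))^{\Mrunidx}\oDot(f(\PP))^{\Midx-\Mrunidx}$ into $(f(\PP)\oPlus g_{i}(e))^{\Midx}$.

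I would then invoke the separability hypothesis (\ref{eq:property_of_f_semiring}): since $f(\PP e)=f(\PP)\oPlus g_{i}(e)$ and $\lambda(\PP e)=\lambda(\PP)\oDot\lambda(e)$, each summand becomes $\lambda(\PP e)\oDot(f(\PP e))^{\Midx}$. Finally, because every path from $A$ to $v$ factors uniquely as $\PP e$ with $\mathrm{fin}(e)=v$ and $\mathrm{init}(e)=\mathrm{fin}(\PP)$, the iterated sum over edges $e$ into $v$ and over paths $\PP:A\rightarrow\mathrm{init}(e)$ is precisely the single sum $\oSum_{\PP':A\rightarrow v}$, which matches the definition (\ref{eq:alpha_definition_semiring}) and closes the induction.

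The real-number proof of Theorem~\ref{thm:forward-moment} is already subtraction-free, so the only genuine care needed is to confirm that no step secretly relies on a ring structure. I expect the delicate point to be the semiring binomial theorem: its validity hinges on interpreting the integer coefficients ${\Midx \choose \Mrunidx}$ as repeated $\oPlus$-addition exactly as defined in the appendix, and on the commutativity of $\oDot$ being available when regrouping $\lambda(e)$, $\lambda(\PP)$, and the two power factors. Once that identity and the annihilation law for $\vSNull$ are in hand, everything else is a routine transcription of the proof of Theorem~\ref{thm:forward-moment}.
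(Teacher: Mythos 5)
Your proposal is correct and follows essentially the same route as the paper's own proof: induction on $\mathrm{depth}(v)$, with the base case settled by the initialization together with the annihilation law, and the inductive step carried out by substituting the induction hypothesis, regrouping via distributivity and commutativity, and collapsing the inner sum with the semiring binomial theorem and the separability property (\ref{eq:property_of_f_semiring}) before invoking the unique factorization of paths $\PP e$ into $v$. Your explicit identification of the delicate points (integer coefficients as repeated $\oPlus$-addition, commutativity of $\oDot$) matches the axioms the paper itemizes in its footnote, so nothing further is needed.
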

\begin{proof}
The proof is by induction on $\mathrm{depth}(v)$. For $\mathrm{depth}(v)=1$
the algorithm computes\begin{eqnarray*}
\alpha^{(\Midx)}(v) & = & \oSum_{e:\mathrm{fin}(e)=v}\lambda(e)\oDot\left(1\:\left(g_{1}(e)\right)^{\Midx}\oDot\vSIdent\right)\\
 & = & \oSum_{e:\mathrm{fin}(e)=v}\lambda(e)\oDot\left(g_{1}(e)\right)^{\Midx}\end{eqnarray*}
which is, as required, the sum of the labels on all edges $e$ joining
$A$ to $v$, weighted by $\left(g_{1}(e)\right)^{\Midx}$. For a
vertex $v$ at depth $i+1$ the value assigned to $\alpha^{(\Midx)}(v)$
is by the induction hypothesis \begin{multline*}
\alpha^{(\Midx)}(v)=\oSum_{e:\mathrm{fin}(e)=v}\lambda(e)\oDot\oSum_{\Mrunidx=0}^{\Midx}{\Midx \choose \Mrunidx}\left(g_{i}(e)\right)^{\Mrunidx}\\
\oDot\oSum_{\PP:A\rightarrow\mathrm{init}e}\lambda(\PP)\oDot\left(f(\PP)\right)^{\Midx-\Mrunidx}.\end{multline*}
 Using the axioms%
\footnote{- $A\oDot{\displaystyle \oSumFoot_{i}}B_{i}={\displaystyle \oSumFoot_{i}}A\oDot B_{i}$
requires distributive law (factor into sum)

~- ${\displaystyle \oSumFoot_{i}}{\displaystyle \oSumFoot_{j}}A_{ij}={\displaystyle \oSumFoot_{j}}{\displaystyle \oSumFoot_{i}}A_{ij}$
requires associativity and commutativity \textcolor{white}{MA$\hspace{-3mm}$RK}
of $\oPlus$ (change order of sums)

~- $A\oDot B=B\oDot A$ requires commutativity of $\oDot$ (change
order of factors)%
} of the commutative semiring $(\sS,\oDot,\oPlus)$ we have \begin{multline*}
\alpha^{(\Midx)}(v)=\oSum_{e:\mathrm{fin}(e)=v}\oSum_{\PP:A\rightarrow\mathrm{init}e}\lambda(e)\oDot\lambda(\PP)\\
\oDot\oSum_{\Mrunidx=0}^{\Midx}{\Midx \choose \Mrunidx}\left(g_{i}(e)\right)^{\Mrunidx}\oDot\left(f(\PP)\right)^{\Midx-\Mrunidx}.\end{multline*}
 Applying Equation (\ref{eq:property_of_f_semiring}) and the binomial
theorem we obtain \[
\alpha^{(\Midx)}(v)=\oSum_{e:\mathrm{fin}(e)=v}\oSum_{\PP:A\rightarrow\mathrm{init}e}\lambda(\PP e)\oDot\left(f(\PP)\oPlus g_{i}(e)\right)^{\Midx}.\]
 But every path from $A$ to $v$ must be of the form $\PP e$, where
$\PP$ is a path from $A$ to a vertex $u$ with $\mathrm{depth}(u)=i$,
$\mathrm{init}(e)=u$ and $\mathrm{fin}(e)=v$. Hence, $\alpha^{(\Midx)}(v)$
is correctly calculated by the theorem.
\end{proof}
\begin{remrk}
Note that the complexity considerations in Theorems \ref{thm:MomentCalculationComplexity}
and \ref{thm:SymbolMomentComplexity} transfer to the calculation
on semi-rings. However, the terminology of {}``addition'' and {}``multiplication''
then refers to the operations $\oPlus$ and $\oDot$.
\end{remrk}

\bibliographystyle{unsrt}
\bibliography{TrellisComputations}

\end{document}